  \pgfplotsset{compat=newest}
\newtheorem{theorem}{Theorem}%[section]
\newtheorem{lemma}{Lemma}
\newtheorem{corollary}{Corollary}
\newtheorem{definition}{Definition}
\newtheorem{remark}{Remark}
\newtheorem{example}{Example}
\newcommand{\cX}{{\cal X}}
\newcommand{\cY}{{\cal Y}}
\newcommand{\cZ}{{\cal Z}}
\newcommand{\cS}{{\cal S}}
\newcommand{\removed}[1]{}
\patchcmd{\@IEEEeqnarray}{\relax}{\relax\intertext@}{}{}
\title{Information Spectrum Converse for Minimum Entropy Couplings and Functional Representations}
\author{
\IEEEauthorblockN{
Yanina Y. Shkel$^{*}$
{{$^{\href{https://orcid.org/0000-0002-2575-1762}{\includegraphics[scale=0.08]{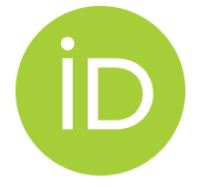}}}$}} $\&$ Anuj Kumar Yadav$^{*}$}{{$^{\href{https://orcid.org/0000-0001-7763-3787}{\includegraphics[scale=0.08]{orcid.JPG}}}$} }
\vspace{2mm}\\
\IEEEauthorblockA{
\{yanina.shkel, anuj.yadav\}@epfl.ch
\\
School of Computer \& Communication Sciences,\\
École Polytechnique Fédérale de Lausanne (EPFL)\\
Switzerland
}\thanks{\textbf{{($*$) : The paper follows alphabetical author order. Both the authors have equally contributed to this work.}}} 
}
\begin{document}
\maketitle 
\thispagestyle{empty}
\begin{abstract}
%THIS PAPER IS ELIGIBLE FOR THE STUDENT PAPER AWARD. 
Given two jointly distributed random variables $(X,Y)$, a functional representation of $X$ is a random variable $Z$ independent of $Y$, and a deterministic function $g(\cdot, \cdot)$ such that $X=g(Y,Z)$. The problem of finding a minimum entropy functional representation is known to be equivalent to the problem of finding a minimum entropy coupling where, given a collection of probability distributions $P_1, \dots, P_m$, the goal is to find a coupling $X_1, \dots, X_m$ ($X_i \sim P_i)$ with the smallest entropy $H_\alpha(X_1, \dots, X_m)$. This paper presents a new information spectrum converse, and applies it to obtain direct lower bounds on minimum entropy in both problems. The new results improve on all known lower bounds, including previous lower bounds based on the concept of majorization. In particular, the presented proofs leverage both - the information spectrum and the majorization - perspectives on minimum entropy couplings and functional representations.
\end{abstract}
%
%\begin{IEEEkeywords}

%\end{IEEEkeywords}
%
\newpage
\tableofcontents
\vspace{4mm}
\section{Introduction}\label{sec:introduction}
\subsection{Functional Representation Lemma}
\vspace{2mm}
The Functional Representation Lemma (FRL)~\cite{gamal_kim_2011} states that it is possible to decompose a random variable $X$ into an arbitrarily correlated random variable $Y$ and a residual random variable $Z$, independent of $Y$. More precisely, given any pair of random variables $(X,Y)\sim P_{XY}$, there exists a random variable $Z$, independent of $Y$ such that $X$ is a deterministic function of $(Y,Z)$. FRL has been widely used in proving various results in multi-user information theory~\cite{nit1,nit2,sfrl}, privacy and secrecy~\cite{yana},  and entropic causal inference~\cite{eci1,eci2}.

Several properties of functional representations have been studied in the literature. One such property of functional representations is the mutual information $I(X;Z)$. Bounds on $I(X;Z)$ have been studied in~\cite{sfrl} in the form of a strong functional representation lemma. This problem has also been studied in~\cite{radha} and~\cite{pub} with applications to one-shot channel simulation with unlimited common randomness. It has been applied to prove the result on constrained min-max remote prediction~\cite{minmax} and to prove the asymptotic achievability in Gelfand-Pinsker's Theorem~\cite{pinsker}. A closely related line of work is the setting of privacy funnel function~\cite{privacyf,privacyf2,privacyf3,privacyf4,privacyf5}. Here, the goal is to maximize $I(X;Z)$ while enforcing the Markov chain $Y\leftrightarrow X \leftrightarrow Z$. The motivation is to have perfect privacy for the information $Y$ while realizing maximum information about $X$, see~\cite{yana,privacyf4,privacyf5} for connections with FRL.

Another important property, which is the focus of this paper, is the entropy of the residual random variable $Z$. The minimization of $H(Z)$ has been applied in identifying the direction of causality~\cite{eci1,eci2}, and the compression length in a variable and fixed length secure compression~\cite{yana,yana2}. It could also be applied to studying a noisy communication channel~\cite{shannon}, if the goal is to minimize the entropy of the noise source. Lower entropy of $Z$ also means that less auxiliary randomness is needed to construct it from $(X,Y)$. This is of prime importance as randomness does not come for free and various expensive methods have been designed to generate it in practical systems~\cite{knuth,gm}. 
%In this paper, we focus on lower bounds on the R\'{e}nyi entropy $H_{\alpha}(Z)$, and prove a new converse in terms of the information spectrum of $Z$. This  converse improves upon the information spectrum converse in~\cite{yana} and the majorization converse in~\cite{cicalese} for R\'{e}nyi entropy of any order $\alpha$. 
%This work presents new lower bounds on the entropy of the residual random variable $Z$. In~\cite{yana}, Shkel et. al provided lower and upper bounds on $H(Z)$, inspired by the techniques from compression. In~\cite{cicalese}, Cicalese et.al provided a lower bound on $H(Z)$ based on the greatest lower bound w.r.t the majorization of probability distributions. 

\subsection{Minimum Entropy Coupling}
\vspace{2mm}
The minimum entropy coupling problem aims to find the joint distribution $(P_{X_1 X_2 \cdots X_m})$ with minimum joint R\'{e}nyi entropy $H_{\alpha}(X_1,X_2,\dots,X_m)$, given the $m$ marginal distributions $X_1\sim P_1, X_2 \sim P_2,\dots, X_m \sim P_m$ (see Definition~\ref{def:2}). This problem has been widely studied in~\cite{cicalese, senk,painsky,cheuk} and is closely related to the functional representation lemma. It has been shown in~\cite{eci2,cheuk} that finding the minimum entropy of $Z$ is the same as solving the minimum entropy coupling problem for the marginal distributions $\{P_{X|Y=y}\}_{y\in\cY}$ (see Lemma~\ref{lemma:1} and Appendix for details). Thus, the lower bounds on $H(Z)$ provided in this paper are also lower bounds on the joint entropy in the minimum entropy coupling problem. 

Additional scenarios where the problem of minimum entropy coupling arises include the entropic causal inference~\cite{eci2, eci1}, dimension reduction~\cite{vidyasagar}, contingency tables and transportation polytopes in~\cite{loera,dobra} and randomness generation~\cite{knuth,gm}. The computation of the minimum entropy coupling was shown to be NP-hard in~\cite{senk} and~\cite{vidyasagar}. In~\cite{cicalese}, it is shown that any coupling has the joint entropy at least $H(\land_{i=1}^m P_{i})$, where $\land$ denotes the greatest lower bound with respect to the majorization of probability distributions. A polynomial-time approximation algorithm for the construction of coupling within $\left \lceil{\log m}\right \rceil $ bits from  $H(\land_{i=1}^m P_{i})$ is also provided in~\cite{cicalese}. {{Li~\cite{cheuk} improved upon the result of~\cite{cicalese} by providing construction of a coupling with joint entropy being within $2-2^{2-m}$ bits from the  $H(\land_{i=1}^mP_{i})$. Recently, Compton~\cite{compton} showed that the greedy coupling algorithm provided in~\cite{eci1} is always within $\log_{2}(e)$ bits from $H(\land_{i=1}^m P_{i})$ which further improves upon the result in~\cite{cheuk} for $m>2$.}} 

{In this paper, we focus on lower bounds on the R\'{e}nyi entropy $H_{\alpha}(Z)$ (or equivalently $H_{\alpha}(X_1,X_2,\dots,X_m)$) for every $\alpha \in [0,\infty)$, and prove a new converse in terms of the information spectrum of $Z$. This  converse improves upon the information spectrum converse in~\cite{yana} and the majorization converse in~\cite{cicalese} for R\'{e}nyi entropy of any order $\alpha$. Similar results have also been recently studied in a parallel work by Compton et. al in~\cite{cmp}} (see Remark~\ref{rem:11} for details).\\

The rest of the paper is organized as follows. In Section~\ref{sec:system:model}, we state our notation, formulate the problem, and review the known lower bounds. We present our main results in Section~\ref{sec:main:results}. In Section~\ref{sec:comp}, we compare all the lower bounds and make concluding remarks. Finally, we postpone some of the proofs to Section~\ref{sec:proofs} and the Appendix.

\section{Preliminaries}\label{sec:system:model}
%In this section, we introduce some of the notation used throughout the paper. We formulate the problems of minimum entropy functional representations, and minimum entropy couplings. Finally, we review the known lower bounds for the two problems. 

\subsection{Notations}\label{subsec:notation}
We denote the probability mass function (PMF) of a random variable using a capital letter, say $P$, while the probability of an event is denoted using the bold-face letter $\mathbb{P}$ and the expectation of a random variable is denoted with $\mathbb{E}$. Given a random variable $X$, its support (and sets in general) is denoted by $\cX$, while a realization is denoted by lower case letter, for example, $x\in \cX$. The information of a random variable $X$ is
\begin{align}
   \imath_{X}(x):=\log\frac{1}{P_{X}(x)} \quad \forall x \in \cX,
\end{align}
where all logarithms have base two unless specified otherwise. The bold-face $\mathbb{F}_X$ denotes the cumulative distribution function (CDF) of $X$. With some abuse of notation, we use $\mathbb{F}_{\imath_X}$ to denote the CDF of the information $\imath_X(X)$ of $X$; $\mathbb{F}_{\imath_X}$ is also known as {\em the information spectrum} of $X$. Recall that the Shannon entropy of a random variable $X$ can be written as 
\begin{align}
    H(X)=\mathbb{E}\left[\imath_{X}(X)\right].
\end{align}
Similarly, the R\'{e}nyi entropy of $X$ can be written as
\begin{align}
H_{\alpha}(X)&=\frac{1}{1-\alpha}\log\left(\mathbb{E}\left[2^{(1-\alpha)\imath_{X}(X)}\right]\right)  \label{eq:ha}
\end{align}
for $\alpha \in [0,1)\cup(1,\infty)$ and $H_{\alpha}(X):=H(X)$ for $\alpha = 1$. Given a random variable $X$ with PMF $P_X$ we also use the notation $H(P_X)$ and $H_\alpha (P_X)$ to denote $H(X)$ and $H_\alpha (X)$. 
% The Renyi entropy for a discrete random variable $X$ is defined as
% \begin{align}
%    H_{\alpha}(X):=\frac{1}{1-\alpha}\log\Big(\sum_{x \in \cX}(P_{X}(x))^{\alpha}\Big) 
% \end{align}
% where the order $\alpha \in [0,1)\cup(1,\infty)$. It can be further re-written as
% \begin{align}
% H_{\alpha}(X)=&\frac{1}{1-\alpha}\log\left(\mathbb{E}\left[2^{(1-\alpha)\imath_{X}(X)}\right]\right)\label{eq:ha}
% \end{align}

%An important tool for analyzing ${H}^\star_{\alpha}(P_{XY})$ is the concept of majorization~\cite{major,cicalese,schur}. 
 Given two PMFs $P=(p_1,p_2, \cdots)$ and $Q=(q_1,q_2,\cdots)$ with countably infinite supports, let the probability masses for both be given in non-increasing order, that is $p_1 \geq p_2 \dots$ and $q_1 \geq q_2 \dots$. Then, we say that the distribution $P$ majorizes $Q$, or $Q \preceq_m P$, if 
\begin{align}
 \sum_{i=1}^k q_i \leq \sum_{i=1}^k p_i
\end{align}
for all $k \in \{1, 2, \dots\}$.
In the case that $P$ and $Q$ have finite supports with different cardinalities, we apply the same definition by padding both the PMFs with zeros.

\subsection{Problem Formulation}
Consider two jointly distributed random variables $(X,Y)\sim P_{XY}$ taking values in a countable set $\cX$ and a finite set $\cY$, respectively. Recall that the Functional Representation Lemma~\cite{gamal_kim_2011,yana,sfrl} says that there exists a random variable $Z$ taking values in some set $\cZ$, such that
    \begin{enumerate}
        \item $X$ is a deterministic function of $Y$ and $Z$ i.e.,
        \begin{align}
              H(X|Y,Z)=0 \label{eq:frl1}
        \end{align}
        \item $Y$ and $Z$ are independent of each other i.e.,
        \begin{align}
              I(Y;Z)=0.\label{eq:frl2}
        \end{align}
    \end{enumerate}
In other words, $X$ can be represented as a deterministic function of two independent random variables $Y$ and $Z$.
    
\begin{definition}[Minimum Entropy Functional Representation]\label{def:1}
The minimum R\'{e}nyi entropy of functional representation of $(X,Y) \sim P_{XY}$ is
\begin{align}
   {H}^\star_{\alpha}(P_{XY}):= \inf_{P_{Z|XY} \colon \substack{ H(X|Y,Z)=0, \\ I(Y;Z)=0}} H_{\alpha}(Z)
\end{align}
 for $\alpha \in [0,\infty)$.
\end{definition}
The coupling of $m$ PMFs $P_1, P_2, \dots, P_m$ refers to a joint distribution of $P_{X_1X_2 \cdots X_m}$ subject to the constraints $X_i\sim P_i$ for all $i \in \{1, 2, \dots, m\}$.
\begin{definition}[Minimum Entropy Coupling]\label{def:2}
The minimum R\'{e}nyi entropy of a coupling of a set of PMFs is
    \begin{align}
        H^\ast_{\alpha}(P_1,\dots&,P_m) \!:=\!
\inf_{P_{X_1X_2\cdots X_m} \colon {X_i} \sim P_{i}}H_{\alpha}(X_1,\dots,X_m)
    \end{align}
    for $\alpha \in [0,\infty)$.
\end{definition}

\begin{lemma}{\label{lemma:1}}
Let $(X,Y) \sim P_{XY}$ be jointly distributed random variables with $\cY = \{y_1, \dots, y_m\}$. Then
\begin{align}
     {H}^\star_{\alpha}(P_{XY}) =   H^\ast_{\alpha}(P_{X|Y =y_1},\dots,P_{X|Y = y_m}).
    \end{align}
That is, the problem of finding the functional representation with the minimum entropy is a minimum entropy coupling problem. 
\end{lemma}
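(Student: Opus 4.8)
The plan is to establish the two inequalities $H^\star_\alpha(P_{XY}) \le H^\ast_\alpha(P_{X|Y=y_1},\dots,P_{X|Y=y_m})$ and its reverse, each by explicitly converting an object of one problem into an object of the other while controlling the entropy. The one tool I would isolate first is a data-processing inequality for R\'{e}nyi entropy: if $W = f(Z)$ for a deterministic map $f$, then $H_\alpha(W) \le H_\alpha(Z)$ for every $\alpha \in [0,\infty)$. For $\alpha \in (0,1)\cup(1,\infty)$ this follows by writing $P_W(w) = \sum_{z\colon f(z)=w} P_Z(z)$ and using superadditivity ($\alpha>1$) or subadditivity ($\alpha<1$) of $t\mapsto t^{\alpha}$ together with the sign of $\tfrac{1}{1-\alpha}$; the case $\alpha=0$ (the support can only shrink under $f$), the Shannon case $\alpha=1$, and the behavior as $\alpha\to\infty$ (the maximal mass can only grow) are each checked directly.

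For the direction ``functional representation $\Rightarrow$ coupling,'' I would take any feasible $Z\in\cZ$ with $I(Y;Z)=0$ and $X=g(Y,Z)$, and define $X_j := g(y_j,Z)$ for $j=1,\dots,m$. Because $Z$ is independent of $Y$, the conditional law of $Z$ given $Y=y_j$ is just $P_Z$, so $X_j=g(y_j,Z)$ has exactly the law $P_{X|Y=y_j}$; hence $(X_1,\dots,X_m)$ is a valid coupling of the conditional marginals. Since this whole tuple is a deterministic function of $Z$, the data-processing inequality gives $H_\alpha(X_1,\dots,X_m)\le H_\alpha(Z)$, and therefore $H^\ast_\alpha(P_{X|Y=y_1},\dots,P_{X|Y=y_m}) \le H_\alpha(Z)$. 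Taking the infimum over all feasible $Z$ yields $H^\ast_\alpha \le H^\star_\alpha$.

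For the reverse direction ``coupling $\Rightarrow$ functional representation,'' I would take any coupling $(X_1,\dots,X_m)$ with $X_j\sim P_{X|Y=y_j}$ and set $Z := (X_1,\dots,X_m)$, declared independent of $Y$, with $g\big(y_j,(x_1,\dots,x_m)\big) := x_j$ the coordinate projection. I then need to check feasibility: $I(Y;Z)=0$ holds by construction, $X=g(Y,Z)$ is deterministic, and --- the point that actually needs verifying --- the induced joint law of $(X,Y)$ equals $P_{XY}$. This holds because conditioned on $Y=y_j$ we have $X=X_j\sim P_{X|Y=y_j}$ while the $Y$-marginal is unchanged, so combining $P_Y$ with these conditionals reconstructs $P_{XY}$ exactly. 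Since $H_\alpha(Z)=H_\alpha(X_1,\dots,X_m)$, taking the infimum over couplings gives $H^\star_\alpha \le H^\ast_\alpha$, and combining the two inequalities closes the proof.

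The main obstacle I anticipate is not the correspondence itself, which is essentially a relabeling, but (i) confirming that deterministic processing does not increase $H_\alpha$ \emph{uniformly} over the full range $\alpha\in[0,\infty)$, including the boundary orders, and (ii) being careful that the reconstruction in the reverse direction reproduces $P_{XY}$ on the nose rather than merely reproducing the conditional marginals --- this requires $P_Y(y_j)>0$ for each $j$, which is guaranteed since $\cY=\{y_1,\dots,y_m\}$ is the support of $Y$.
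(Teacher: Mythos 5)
Your proof is correct, and it is in fact more self-contained than the one in the paper. For the inequality $H^\star_\alpha(P_{XY}) \geq H^\ast_\alpha(P_{X|Y=y_1},\dots,P_{X|Y=y_m})$ your argument coincides with the paper's: both take a feasible $Z$, form the induced coupling $X_j = g(y_j,Z)$ (using $I(Y;Z)=0$ to identify its marginals with the conditionals $P_{X|Y=y_j}$), and then argue that deterministic processing of $Z$ cannot increase R\'{e}nyi entropy. The only cosmetic difference is how that last step is justified: the paper writes $H_\alpha\bigl(Z, g_{y_1}(Z),\dots,g_{y_m}(Z)\bigr) = H_\alpha(Z) \geq H_\alpha\bigl(g_{y_1}(Z),\dots,g_{y_m}(Z)\bigr)$, invoking monotonicity of $H_\alpha$ under marginalization, while you prove the data-processing inequality directly from sub/superadditivity of $t\mapsto t^\alpha$; these are the same computation packaged differently. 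Where you genuinely go beyond the paper is the reverse inequality: the paper's appendix stops after the lower bound (which is all the paper needs to transfer its converses), deferring the direction ``coupling $\Rightarrow$ functional representation'' to the cited references, whereas you carry it out explicitly --- taking $Z=(X_1,\dots,X_m)$ independent of $Y$, $g$ the coordinate projection, and, importantly, verifying that this construction reproduces the joint law $P_{XY}$ rather than just the conditional marginals. That verification (together with the remark that $P_Y(y_j)>0$ for all $j$) is exactly the step needed to turn the paper's one-sided argument into the claimed equality, so your write-up proves the lemma as stated rather than only half of it.
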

\begin{proof}
The lemma is proved in~\cite{eci1,cheuk}. For the sake of completeness, the proof is presented in the Appendix.
\end{proof}
We state all results in terms of the functional representation conditions for the remainder of the paper. However, in light of Lemma~\ref{lemma:1}, we see that the same results also apply to the minimum entropy coupling problem.

\subsection{Prior Work}
It is shown in~\cite{cicalese} that
\begin{align}
           {H}^\star_{\alpha}(P_{XY}) \geq H_{\alpha}(\land_{y \in \cY}P_{X|Y=y}) \label{eq:cheuk}
\end{align}
where $\land$ represents the greatest lower bound with respect to majorization, $\preceq_m$, of the set $ \{P_{X|Y=y}\}_{y \in \cY}$. For details, refer to~\cite{major,cicalese}. Likewise,~\cite{yana,cicalese} independently presented the following lower bound for $\alpha =1$,
\begin{align}
       H^\star_{\alpha}(P_{XY}) \geq \sup_{y \in \cY} H_{\alpha}(P_{X|Y=y}).\label{eq:yana2}
\end{align}
Although the general case is not addressed in~\cite{yana,cicalese}, this lower bound could readily be extended to R\'{e}nyi Entropy of order $\alpha \in [0,\infty)$, see Section~\ref{sec:main:results} and Appendix for details.
%Note that  R\'{e}nyi Entropy is known to be Schur concave with respect to majorization: that is, if $Q \preceq_m P$ then $H_\alpha(Q) \geq H_\alpha(P)$. It follows from this that the left-hand-side of~(\ref{eq:cheuk}) is always greater than or equal to~ the left-hand-side of~(\ref{eq:yana2}). 

The following information spectrum converse is essentially shown in~\cite{yana}. Let $Z$ be any random variable satisfying~\eqref{eq:frl1} and~\eqref{eq:frl2}. Then, for any real valued $t \in [0,\infty)$ and $\tau > 0$,
\begin{align}
         \mathbb{P}[{\imath_{Z}(Z)\!>t}] \geq &\sup_{y \in \cY} \mathbb{P}[\imath_{X|Y}(X|Y)\!>t\!+\!\tau|Y\!=\!y]\!-\!\exp(-\tau). \label{eq:old_ic}
\end{align}
Our main contribution in this work is to strengthen~\eqref{eq:old_ic} by removing the dependence on $\tau$, as well as to leverage this new information spectrum converse in order to get direct lower bounds on ${H}^\star_{\alpha}(P_{XY})$.
\section{Main Results}\label{sec:main:results}
% To state our results below, we first define the following:
% $$G(t):=\max_{s \in \cS}\mathbb{P}[\imath_{X|S}(X|S)>t|S=s]$$
% $$G'(t):=\max_{s \in \cS}\mathbb{P}[\imath_{X|S=s}(X|s)\geq t]$$

% $$t':=2^{(1-\alpha)t}$$

%We state our main results in this section. We begin with our information spectrum converse in Theorem~\ref{thm:1}. We then introduce the idea of a lower bound with respect to the information spectrum, and connect it to majorization in Lemmas~\ref{lem:preceq} and~\ref{lem:greedy}. Then, we present two corollaries of our converse that give direct lower bounds on entropy of residual random variable in functional representations.

\subsection{Information Spectrum Converse}
%We begin with our information spectrum converse. 
\begin{theorem}\label{thm:1}
Let $(X,Y)$ be jointly distributed random variables supported on countable $\cal X$ and countable $\cal Y$. Consider any random variable $Z$ that satisfies~\eqref{eq:frl1} and~\eqref{eq:frl2}. Then, for any real-valued $t \in [0,\infty)$ we have,
\begin{align}
    \mathbb{P}[\imath_Z(Z) > t] \geq \sup_{y \in \cY}\mathbb{P}[\imath_{X|Y}(X|Y)>t|Y=y].
\end{align}
\end{theorem}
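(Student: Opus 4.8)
The plan is to fix a single $y \in \cY$ and establish the one-sided inequality $\mathbb{P}[\imath_Z(Z) > t] \geq \mathbb{P}[\imath_{X|Y}(X|Y) > t \mid Y=y]$; taking the supremum over $y$ at the very end then yields the theorem. The whole argument rests on comparing the information of $Z$ with the conditional information of $X$ realization-by-realization, exploiting that $X$ is a deterministic function of $(Y,Z)$ and that $Z$ is independent of $Y$.

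First I would use the two functional-representation constraints to simplify the conditional law. Since $H(X \mid Y,Z)=0$ by~\eqref{eq:frl1}, there is a deterministic map $g$ with $X=g(Y,Z)$; fixing $Y=y$ I write $g_y(\cdot):=g(y,\cdot)$, so that conditioned on $\{Y=y\}$ we have $X=g_y(Z)$. Since $I(Y;Z)=0$ by~\eqref{eq:frl2}, the conditional law of $Z$ given $Y=y$ equals its marginal, i.e.\ $P_{Z|Y}(\cdot \mid y)=P_Z(\cdot)$. Consequently the conditional distribution of $X$ given $Y=y$ is exactly the pushforward of $P_Z$ through $g_y$, namely $P_{X|Y}(x \mid y)=\sum_{z \,:\, g_y(z)=x} P_Z(z)$.

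The key step---and the only place where any real content enters---is a pointwise domination of probabilities. Because $g_y$ may be many-to-one, for every $z$ in the support the aggregated mass $P_{X|Y}(g_y(z)\mid y)$ contains at least the single atom $P_Z(z)$, so $P_{X|Y}(g_y(z)\mid y) \geq P_Z(z)$. Applying $\log \tfrac{1}{(\cdot)}$ reverses the inequality and gives the pointwise bound $\imath_{X|Y}(g_y(z)\mid y) \leq \imath_Z(z)$ for every such $z$. This is the crux of the proof: it is conceptually simple, but it is precisely where the improvement over the previous converse~\eqref{eq:old_ic} originates, since no probabilistic $\exp(-\tau)$ slack is incurred in passing from the function value back to $Z$.

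Finally I would translate this pointwise bound into the desired event inclusion. The inequality $\imath_{X|Y}(g_y(z)\mid y) \leq \imath_Z(z)$ immediately gives the containment $\{z : \imath_{X|Y}(g_y(z)\mid y) > t\} \subseteq \{z : \imath_Z(z) > t\}$. Summing $P_Z$ over both sets, and using $P_{Z|Y}(\cdot \mid y)=P_Z(\cdot)$ to rewrite the left-hand sum as a conditional probability under $\{Y=y\}$, yields $\mathbb{P}[\imath_{X|Y}(X|Y) > t \mid Y=y] \leq \mathbb{P}[\imath_Z(Z) > t]$. Since $y \in \cY$ was arbitrary, taking the supremum over $y$ completes the argument.
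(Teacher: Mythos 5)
Your proof is correct and follows essentially the same route as the paper's: both derive the pushforward identity $P_{X|Y}(x\mid y)=\sum_{z\colon g_y(z)=x}P_Z(z)$ from determinism plus independence, and both exploit the fact that each atom $P_Z(z)$ is dominated by the aggregated mass $P_{X|Y}(g_y(z)\mid y)$, which in the paper appears as the set containment $\bigcup_{x \in \cX_{y,t}}\cZ_y(x) \subseteq \cZ_{t}$ and in your write-up as the pointwise bound $\imath_{X|Y}(g_y(z)\mid y) \leq \imath_Z(z)$. The two formulations are equivalent, so no gap remains.
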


The proof of Theorem~\ref{thm:1} is given in Section~\ref{sec:proofs}. In this paper, we focus on countable $\cal X$ and finite $\cal Y$, though the results of Theorem~\ref{thm:1} could be extended to a more general setting. 

In order to connect Theorem~\ref{thm:1} to the majorization lower bound~\eqref{eq:cheuk}, we propose a similar definition with the ordering based on the information spectrum.
\begin{definition}[$\preceq_\imath$]
Consider two PMFs $P$ and $Q$ with countably infinite supports. We write $Q \preceq_\imath P$, if the information spectrum of $Q$ lower bounds the information spectrum of $P$. In other words, let $U \sim Q$ and $V\sim P$. Then
\begin{align}
 \mathbb{P} \left[\imath_U(U) \leq t \right] \leq \mathbb{P} \left[\imath_V(V) \leq t \right]
\end{align}
or equivalently,
\begin{align}
 \mathbb{P} \left[\imath_U(U) > t \right] \geq \mathbb{P} \left[\imath_V(V) > t \right]
\end{align}
for all $t\in [0, \infty)$.
\end{definition}

The following lemmas connect the above definition to the definition of majorization.
\begin{lemma}\label{lem:preceq}
Given two PMFs $Q$ and $P$ we have that 
\begin{align}
Q\preceq_\imath P \Rightarrow Q\preceq_m P
\end{align}
where $\preceq_m$ denotes majorization. 
\end{lemma}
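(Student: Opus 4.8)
The plan is to route both hypotheses through a common object — the truncated-sum function $T_R(\theta) := \sum_i (r_i-\theta)_+$ of a PMF $R$ with masses sorted as $r_1 \ge r_2 \ge \cdots$ — and to show that $Q \preceq_\imath P$ forces $T_Q(\theta) \le T_P(\theta)$ for every $\theta > 0$, which is exactly what is needed for majorization.

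First I would record two dual identities. On the majorization side, for every integer $k$,
\[
\sum_{i=1}^k r_i = \min_{\theta \ge 0}\big[\, T_R(\theta) + k\theta \,\big],
\]
the minimum being attained at $\theta = r_k$; this is an elementary Legendre-type fact, obtained by writing $r_i = (r_i-\theta)_+ + \min(r_i,\theta)$ and optimizing. In particular, $T_Q \le T_P$ pointwise would immediately give $\sum_{i=1}^k q_i \le \sum_{i=1}^k p_i$ for all $k$, i.e. $Q \preceq_m P$, so only this one direction of the standard characterization is required. On the information-spectrum side I would rewrite the hypothesis: with $G_R(\theta) := \sum_{i:\, r_i \ge \theta} r_i$ the mass at or above level $\theta$ and $N_R(\theta) := |\{i : r_i \ge \theta\}|$ the corresponding count, the event $\{\imath_R \le t\}$ is exactly $\{R \ge 2^{-t}\}$, so $Q \preceq_\imath P$ is precisely $G_Q(\theta) \le G_P(\theta)$ for all $\theta \in (0,1]$ (and trivially for $\theta > 1$). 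The bridge between the two sides is the identity $G_R(\theta) = T_R(\theta) + \theta N_R(\theta)$ together with $N_R(\theta) = -T_R'(\theta)$ off the countably many kinks, which yields $G_R(\theta) = T_R(\theta) - \theta\, T_R'(\theta)$.

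The heart of the argument is then a monotone-ratio step. Set $D(\theta) := T_P(\theta) - T_Q(\theta)$, a continuous, piecewise-linear, Lipschitz function on $(0,\infty)$ with $D \equiv 0$ for $\theta \ge \max(p_1,q_1)$. The hypothesis $G_Q \le G_P$ becomes $D(\theta) - \theta D'(\theta) \ge 0$, equivalently $\big(D(\theta)/\theta\big)' \le 0$ almost everywhere. Hence $\theta \mapsto D(\theta)/\theta$ is non-increasing on $(0,\infty)$; since it vanishes for all large $\theta$, it is nonnegative throughout, so $D(\theta) \ge 0$, i.e. $T_Q(\theta) \le T_P(\theta)$ for all $\theta$. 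Combined with the identity of the first step, this delivers $Q \preceq_m P$.

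The main obstacle — and the reason a one-shot "plug in a single threshold" argument fails — is precisely that the hypothesis controls the above-level mass $G_R$, whereas majorization compares sorted partial sums; the two differ by the count term $\theta N_R(\theta)$, whose sign cannot be pinned down at any single $\theta$. The differential-inequality device is exactly what integrates the pointwise mass comparison into the global partial-sum comparison. I would then need to dispatch three routine technicalities: (i) the non-smoothness of $T_R$ at the atom values, handled by absolute continuity of $D/\theta$ so that $h(\theta_2)-h(\theta_1) = \int_{\theta_1}^{\theta_2} h' \le 0$; (ii) countably infinite support, where $N_R(0^+)$ may be infinite, which affects only the behaviour at $\theta = 0$ that the argument never invokes, since the anchor $D/\theta = 0$ sits at the large-$\theta$ end; and (iii) the degenerate cases in which $P$ or $Q$ is a point mass, which are immediate.
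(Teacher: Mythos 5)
Your proof is correct, but it takes a genuinely different route from the paper's. The paper proves the lemma by a short induction on the partial-sum index $k$: with $U\sim Q$, $V\sim P$, $t_i=\log(1/q_i)$ and $s_i=\log(1/p_i)$, if $p_{k+1}<q_{k+1}$ then $s_{k+1}>t_{k+1}$, so evaluating the spectrum hypothesis at the single threshold $t_{k+1}$ gives
\begin{align}
\sum_{i=1}^{k+1} q_i \;\leq\; \mathbb{P}\left[\imath_U(U)\leq t_{k+1}\right] \;\leq\; \mathbb{P}\left[\imath_V(V)\leq t_{k+1}\right] \;\leq\; \sum_{i=1}^{k} p_i,
\end{align}
since at most the first $k$ atoms of $P$ can lie in that event; combined with the trivial case $p_{k+1}\geq q_{k+1}$ this closes the induction, with no auxiliary constructions. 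You instead route everything through the truncated-sum function $T_R(\theta)=\sum_i (r_i-\theta)_+$: the Legendre-type identity $\sum_{i=1}^k r_i=\min_{\theta\geq 0}[T_R(\theta)+k\theta]$ reduces majorization to the pointwise comparison $T_Q\leq T_P$ (this is the Hardy--Littlewood--P\'olya water-filling characterization, so your intermediate target is equivalent to, not stronger than, majorization), and the bridge identity $G_R(\theta)=T_R(\theta)+\theta N_R(\theta)$, with $N_R=-T_R'$ off the kinks, turns the hypothesis $G_Q\leq G_P$ into the differential inequality $(D/\theta)'\leq 0$ a.e.\ for $D=T_P-T_Q$, which you integrate from the large-$\theta$ anchor where $D$ vanishes. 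I checked the individual steps and they are sound, including the exact validity of the bridge identity at atom values and the translation of $\preceq_\imath$ into $G_Q\leq G_P$ on $(0,1]$; the one imprecision is that $D$ is only \emph{locally} Lipschitz on $(0,\infty)$ (for countably infinite support the slope can blow up as $\theta\to 0^+$), but, as your remark (ii) anticipates, the argument only integrates over compact subintervals of $(0,\infty)$, so local absolute continuity suffices. As for what each approach buys: the paper's induction is shorter, entirely elementary, and invokes the hypothesis only at the countably many thresholds $t_k$ it actually needs; your argument requires measure-theoretic machinery but makes the structural link between the two order relations explicit through $G=T-\theta T'$, aggregates the hypothesis over all thresholds at once rather than step by step, and is phrased in a continuous-variable language that would extend more readily beyond discrete distributions.
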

The proof of Lemma~\ref{lem:preceq} is based on standard induction arguments, see Appendix for details. ({See also~\cite{li-idf}, where this result was independently proven.})

Note that in general, the reverse is not true: it is possible to have $Q\preceq_m P$ but not $ Q\preceq_\imath P$.

\begin{lemma}\label{lem:greedy}
Let $\{P_1, \dots, P_m\}$ be a finite collection of PMFs supported on finite $\{\cS_1, \dots, \cS_m\}$ and let
\begin{align}
   {\cal F} = \{Q \colon Q \preceq_\imath P_i \quad \forall i \in \{1, \dots, m\}\}.
\end{align}
There exists $Q^\ast \in {\cal F}$ such that $Q\preceq_m Q^\ast$ for all $Q \in {\cal F}$. In addition, $Q^\ast$ could be computed using a greedy algorithm with complexity linear in $\sum_{i=1}^m |\cS_i|$. This result could also be extended for a countable set $\{\cS_1, \cS_2,\dots\}$. We leave the complete details for future work. 
\end{lemma}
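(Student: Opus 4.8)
The plan is to reduce membership in ${\cal F}$ to a single envelope constraint, to construct $Q^\ast$ by a greedy rule, and then to establish majorization-maximality by a minimal-counterexample argument. First I would reformulate ${\cal F}$. Writing $F_P(t):=\bbP[\imath_P(P)\le t]$ for the information-spectrum CDF, the conjunction of $Q\preceq_\imath P_i$ over all $i$ is exactly $F_Q(t)\le G(t)$ for all $t\ge 0$, where $G(t):=\min_{1\le i\le m}F_{P_i}(t)$. Equivalently, in the probability domain, let $H(v):=G(\log 1/v)$ be the maximal mass a member of ${\cal F}$ may place on symbols of probability at least $v$; then $Q\in{\cal F}$ iff for every threshold $v$ the total $Q$-mass on symbols of probability $\ge v$ is at most $H(v)$. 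Since $G$ is non-decreasing and jumps only at the finitely many points $\log 1/v$ with $v$ a mass of some $P_i$, it can be assembled in time linear in $\sum_i|\cS_i|$ by merging the pre-sorted marginals.

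Next I would define the greedy rule: construct the masses of $Q^\ast$ in non-increasing order, each time appending the \emph{largest admissible} probability, i.e.\ if the masses placed so far sum to $C$, the next mass is
\[
p^\ast \;=\; \max\{\, p : C + p \le H(p)\,\},
\]
stopping once the running total reaches $1$. This coincides with the greedy coupling of~\cite{eci1}: one checks that $p^\ast$ equals $\min_i(\text{largest remaining mass of }P_i)$, and appending it corresponds to subtracting $p^\ast$ from the top of every marginal. Because each step zeroes out the top entry of at least one marginal, the procedure terminates after at most $\sum_i|\cS_i|$ steps, and maintaining a pointer to the current top of each marginal together with their running minimum realizes each step cheaply, giving the stated complexity. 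Membership $Q^\ast\in{\cal F}$ is then immediate: since symbols are placed in decreasing order, for any $t$ the mass of $Q^\ast$ on symbols of probability $\ge 2^{-t}$ equals the running total right after the last such symbol was appended, which by the admissibility test is at most $G(t)$, so $F_{Q^\ast}\le G$.

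It remains to show $Q^\ast$ is majorization-maximal. Let $S_Q(k)$ denote the sum of the $k$ largest masses of $Q$, and let $q^\ast_1\ge q^\ast_2\ge\cdots$ be the greedy masses. I would argue by minimal counterexample: fix $Q\in{\cal F}$ and let $k$ be the smallest index with $S_Q(k)>S_{Q^\ast}(k)$. Minimality gives $S_Q(k-1)\le S_{Q^\ast}(k-1)$, whence $q_{[k]}=S_Q(k)-S_Q(k-1)>S_{Q^\ast}(k)-S_{Q^\ast}(k-1)=q^\ast_k$, where $q_{[k]}$ is the $k$-th largest mass of $Q$. Now take the threshold $v=q^\ast_k$ (in the presence of ties, the largest threshold for which $Q^\ast$ has exactly $k$ symbols of value $\ge v$, a routine adjustment). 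Every one of the top $k$ masses of $Q$ exceeds $v$, so the $Q$-mass on symbols of probability $\ge v$ is at least $S_Q(k)$, whereas the corresponding $Q^\ast$-mass is exactly $S_{Q^\ast}(k)$. The whole argument therefore rests on the claim that \emph{greedy maximizes, over all of ${\cal F}$, the cumulative mass on symbols of probability at least $v$, at every breakpoint $v=q^\ast_k$}; granting it, we obtain $S_{Q^\ast}(k)\ge S_Q(k)>S_{Q^\ast}(k)$, a contradiction, and the lemma follows together with the existence of $Q^\ast$.

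The hard part will be exactly this cumulative-mass optimality of the greedy, and I expect it to be the main obstacle. It cannot be proved by treating $H$ as an arbitrary non-increasing capacity profile: for a generic budget the largest-first rule can be beaten by several equal medium-sized masses, precisely because of the integrality in ``a symbol of probability $p$ carries mass $p$.'' The point is that $H=\min_i F_{P_i}$ is a minimum of \emph{genuine} information spectra, and I would establish optimality by an exchange argument that rewrites any admissible family of symbols above $v$ into the greedy one without decreasing total mass, exploiting this realizability to rule out the integrality pathology. Finally, the passage to countable $\{\cS_1,\cS_2,\dots\}$ would follow by a truncation-and-limit argument, which is why its full details are deferred.
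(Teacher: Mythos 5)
Your setup is the same as the paper's (the envelope reformulation of ${\cal F}$ and the greedy rule $p^\ast=\max\{p: C+p\le H(p)\}$ are exactly the paper's construction), but two steps of your argument are genuinely broken. First, the identification of this greedy with the greedy \emph{coupling} of~\cite{eci1} is false. Take $P_1=(0.4,0.3,0.3)$ and $P_2=(0.5,0.5)$: here $P_1\preceq_\imath P_2$, and your envelope greedy correctly returns $Q^\ast=(0.4,0.3,0.3)=P_1$, whereas the coupling greedy of~\cite{eci1} (subtract the minimum of the top masses from every marginal and recurse) returns $(0.4,0.3,0.2,0.1)$, which fails to majorize $P_1\in{\cal F}$. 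The two objects must differ in general, since one upper-bounds and the other lower-bounds the minimum entropy coupling (they are distinct curves in Figure~\ref{fig:f1}). Consequently your termination/complexity argument (``each step zeroes out the top entry of at least one marginal'') has no basis; termination has to be argued, as in the paper, from the fact that the envelope is a step function with at most $\sum_i|\cS_i|$ jumps and the greedy spends only boundedly many atoms per plateau.

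Second, and more importantly, the crux of your majorization proof --- that greedy maximizes over ${\cal F}$ the cumulative mass on symbols of probability $\ge v$ at each breakpoint $v=q^\ast_k$ --- is exactly the part you leave unproven, and it is also the wrong target. It is a spectrum-dominance statement about $Q^\ast$ itself; its natural strengthening to arbitrary thresholds is false (for $P_1=(0.5,0.5)$, $P_2=(0.6,0.4)$ one has $Q^\ast=(0.5,0.4,0.1)$, yet $Q=(0.5,0.3,0.2)\in{\cal F}$ carries mass $1>0.9$ above $v=0.15$), and your tie adjustment does not parse: if $q^\ast_{k+1}=q^\ast_k$ there is \emph{no} threshold at which $Q^\ast$ has exactly $k$ symbols. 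The paper's induction avoids all of this by comparing both partial sums to the \emph{envelope} rather than to $Q^\ast$'s spectrum. Write $\Phi(v):=\min_i\sum_{s\colon P_i(s)\ge v}P_i(s)$ for your $H(v)$, and suppose $\sum_{i\le k}q_i\le\sum_{i\le k}q^\ast_i$ while $q_{k+1}>q^\ast_{k+1}$. Then (i) the membership constraint on $Q$ applied at its own $(k\!+\!1)$-st atom gives $\sum_{i\le k+1}q_i\le \Phi(q_{k+1})$; and (ii) the quantity $g:=\Phi(q_{k+1})-\sum_{i\le k}q^\ast_i$ satisfies $g<q_{k+1}$ (otherwise $q_{k+1}$ itself would have been admissible at step $k+1$, contradicting greediness and $q^\ast_{k+1}<q_{k+1}$) and $g$ is admissible at step $k+1$ (placing it brings the running total to $\Phi(q_{k+1})\le\Phi(g)$), so $q^\ast_{k+1}\ge g$, i.e.\ $\sum_{i\le k+1}q^\ast_i\ge\Phi(q_{k+1})$. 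Chaining (i) and (ii) closes the induction in two lines, with no exchange argument and no integrality issue; note also that the ordering constraint never binds, since the admissible sets shrink as the running total grows. This envelope-comparison step is the idea your proposal is missing.
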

The idea behind the greedy construction of $Q^\ast$ is as follows. We construct $Q^\ast = (q^\ast_1, q^\ast_2, \dots)$ with $q^\ast_1 \geq q^\ast_2 \dots$ in a greedy way. First, assign 
\begin{align}
q^\ast_1 = \min_{i\in \{1, \dots, m\}} \left \{\max_{s\in \cS_i} P_i(s) \right\}.
\end{align}
This is the maximum value we can assign to $q^\ast_1$ while satisfying the information spectrum constraint defined by $\cal F$. We continue with this strategy: at each step $k$ we assign the maximum possible value to $q^\ast_k$ until the total probability of all $q^\ast_k$ reaches one. The fact that this distribution majorizes all other $Q\in {\cal F}$ is proven using a standard induction argument. Likewise, the number of steps the algorithm needs to take is linear in $\sum_{i=1}^m |\cS_i|$, since the constraints defined by $\cal F$ is a step function with at most steps $\sum_{i=1}^m |\cS_i|$, see the Appendix.

\subsection{Lower bounds on Shannon and R\'{e}nyi Entropy}
While Theorem~\ref{thm:1} does not provide direct lower bounds on $ H^\star_\alpha(P_{XY})$, the next two corollaries could be used to compute such lower bounds.
\begin{corollary}\label{cor:1}
Let $(X,Y)\sim P_{XY}$ be jointly distributed random variables supported on countable $\cal X$ and countable $\cal Y$. Then
\begin{align}
    H^\star_\alpha(P_{XY}) \geq K_\alpha(P_{XY}) \label{eq:cor_ren}
\end{align}
for any $\alpha \in [0, \infty)$, where
\begin{align}
K_{\alpha}(P_{XY})&:=\begin{cases}
\frac{1}{1-\alpha}\log\left[1+\int_{0}^{\infty}  J_{\alpha}(t)dt\right], &\text{if } \alpha\in [0,1)  \vspace{2mm}\\
& \text{or } \alpha \in (1,\infty) \\
\int_{0}^{\infty}G(t)dt ,  & \alpha=1 
\end{cases}\\
 \mbox{ with }G(t)&:=\sup_{y \in \cY}\mathbb{P}[\imath_{X|Y}(X|Y)>t|Y=y]\\
     \mbox{ and } J_{\alpha}(t)&:=\ln2 (1-\alpha) G(t)2^{(1-\alpha)t}.
\end{align}
 \end{corollary}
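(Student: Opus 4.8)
The plan is to use Theorem~\ref{thm:1} as the single input: it guarantees that for \emph{any} $Z$ obeying~\eqref{eq:frl1} and~\eqref{eq:frl2}, the survival function of the information $\imath_Z(Z)$ dominates $G$ pointwise, i.e.\ $\bbP[\imath_Z(Z)>t]\ge G(t)$ for every $t\ge 0$. Since $G$ does not depend on $Z$, neither does $K_\alpha(P_{XY})$, so it suffices to establish $H_\alpha(Z)\ge K_\alpha(P_{XY})$ for each admissible $Z$ and then take the infimum over $Z$ that defines $H^\star_\alpha(P_{XY})$.

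For $\alpha=1$ I would invoke the tail (layer-cake) integration formula. Because $\imath_Z(Z)\ge 0$, we have $H(Z)=\bbE[\imath_Z(Z)]=\int_0^\infty \bbP[\imath_Z(Z)>t]\,dt$; applying the pointwise bound under the integral sign gives $H(Z)\ge \int_0^\infty G(t)\,dt=K_1(P_{XY})$ directly.

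For $\alpha\neq 1$ the quantity to control is $\bbE[2^{(1-\alpha)\imath_Z(Z)}]$. Writing $W:=\imath_Z(Z)\ge 0$ and $g(t):=2^{(1-\alpha)t}$, the key step is the identity $\bbE[g(W)]=g(0)+\int_0^\infty g'(t)\,\bbP[W>t]\,dt$, which here reads $\bbE[2^{(1-\alpha)W}]=1+\int_0^\infty (1-\alpha)\ln 2\,\, 2^{(1-\alpha)t}\,\bbP[W>t]\,dt$. Replacing $\bbP[W>t]$ by $G(t)$ in this integrand reproduces exactly $J_\alpha(t)$, so the target $K_\alpha(P_{XY})$ is precisely this formula evaluated at the bound $G$.

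The delicate part, and the step I expect to be the main obstacle, is the sign bookkeeping. The integrand factor $g'(t)=(1-\alpha)\ln 2\,\, 2^{(1-\alpha)t}$ carries the sign of $(1-\alpha)$, and so does the outer prefactor $\tfrac{1}{1-\alpha}$. For $\alpha\in[0,1)$ we have $g'(t)\ge 0$, so $\bbP[W>t]\ge G(t)$ pushes the integral up, giving $\bbE[2^{(1-\alpha)W}]\ge 1+\int_0^\infty J_\alpha(t)\,dt$; since $\tfrac{1}{1-\alpha}>0$ and $\log$ is increasing, this transfers to $H_\alpha(Z)\ge K_\alpha(P_{XY})$. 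For $\alpha\in(1,\infty)$ we have $g'(t)\le 0$, so the same pointwise bound pushes the integral \emph{down}, giving $\bbE[2^{(1-\alpha)W}]\le 1+\int_0^\infty J_\alpha(t)\,dt$; but now $\tfrac{1}{1-\alpha}<0$ reverses the inequality a second time, again yielding $H_\alpha(Z)\ge K_\alpha(P_{XY})$. The two reversals composing to the correct direction in the $\alpha>1$ case is the only genuinely error-prone point. I would also note that if the relevant integral diverges (possible only for $\alpha<1$), the bound reads $H_\alpha(Z)=\infty$ and holds trivially, so no extra integrability hypothesis is needed. Taking the infimum over admissible $Z$ completes the proof.
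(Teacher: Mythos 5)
Your proposal is correct, and its skeleton matches the paper's proof: both reduce the corollary to Theorem~\ref{thm:1} and then integrate the resulting pointwise bound on the survival function of $\imath_Z(Z)$, with the $\alpha=1$ case handled identically via the layer-cake formula. The difference is in how the $\alpha \neq 1$ cases are executed. The paper applies the plain tail formula $\bbE[V]=\int_0^\infty \bbP[V>v]\,dv$ to the transformed variable $V=2^{(1-\alpha)\imath_Z(Z)}$, splits the integral at $v=1$, and changes variables back to $t$; because for $\alpha>1$ the map $t\mapsto 2^{(1-\alpha)t}$ is decreasing, the relevant event corresponds to $\{\imath_Z(Z)\ge t\}$ with a \emph{non-strict} inequality, so the paper needs a separate variant of Theorem~\ref{thm:1} with $\ge$ in place of $>$ (its Remark~\ref{rem:fff}) plus a closing remark that the strict and non-strict versions integrate to the same value. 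Your identity $\bbE[g(W)]=g(0)+\int_0^\infty g'(t)\,\bbP[W>t]\,dt$, applied directly to $W=\imath_Z(Z)$ with $g(t)=2^{(1-\alpha)t}$, is the same computation after the change of variables, but it works with the strict survival function throughout and treats both signs of $1-\alpha$ in one formula, so it needs only Theorem~\ref{thm:1} as stated and no auxiliary remark. Your sign bookkeeping (the bound pushes the integral up for $\alpha<1$, down for $\alpha>1$, with the prefactor $\frac{1}{1-\alpha}$ restoring the direction each time) is exactly right, and your observation that divergence of the integral, possible only for $\alpha<1$, forces $H_\alpha(Z)=\infty$ disposes of integrability concerns. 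The only point to add for full rigor is the justification of the identity itself: for $\alpha\in[0,1)$ it follows from Tonelli (nonnegative integrand), and for $\alpha>1$ from Fubini, which applies because $\int_0^\infty \lvert g'(t)\rvert\,dt = 1 < \infty$.
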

% \begin{corollary}\label{cor:1}
% For any jointly distributed $(X,Y)$ we have that 
% \begin{align}
%    H^\star_1(P_{XY})\geq \int_{0}^{\infty}G(t)dt
% \end{align}
% where $G(t):=\max_{y \in \cY}\mathbb{P}[\imath_{X|Y}(X|Y)>t|Y=y]$.
% % \begin{align}
% % G(t):=\max_{y \in \cY}\mathbb{P}[\imath_{X|Y}(X|Y)>t|Y=y]{\label{eq:gt}}
% % \end{align}
% Further, let the parameter $\alpha \in [0,1) \cup (1,\infty)$. Then, 
% \begin{align}
%     H^\star_\alpha(P_{XY}) \geq \frac{1}{1-\alpha}\log\left[1+\int_{0}^{\infty}  J_{\alpha}(t)dt\right]. \label{eq:cor_ren}
% \end{align}
% where $J_{\alpha}(t):=\ln2 (1-\alpha) G(t)2^{(1-\alpha)t}$.
% % \begin{align}
% %     J_{\alpha}(t):=\ln2 (1-\alpha) G(t)2^{(1-\alpha)t}. \label{eq:Jt}
% % \end{align}
% \end{corollary}
\begin{proof}
Theorem~\ref{thm:1} could be restated as
\begin{align}
    1-\mathbb{F}_{\imath_Z}(t) \geq G(t).\label{eq:asd}
\end{align}
Recall that for any non-negative random variable $V$, its expectation could be related to its CDF via
\begin{align}
    \mathbb{E}[V]=\int_{0}^{\infty}(1-\mathbb{F}_{V}(t))dt, \label{eq:excdf}
\end{align}
see, for example~\cite{hajek}. Since $\imath_Z(Z)\geq 0$, we have that
\begin{align}
    H(Z)&=\mathbb{E}[\imath_{Z}(Z)] =\int_{0}^{\infty}\Big(1-\mathbb{F}_{\imath_Z}(t)\Big) \geq \int_{0}^{\infty}G(t)dt,\label{eq:x2}
\end{align}
where~(\ref{eq:x2}) follows from~(\ref{eq:asd}). Equation~(\ref{eq:cor_ren}) is proved similarly for $\alpha \in [0,1) \cup (1, \infty)$, see Appendix for details.
\end{proof}

We remark that Corollary~\ref{cor:1} is different from all other lower bounds in this paper in that it does not involve computing the entropy of a random variable. For example, we compute the expectation of a random variable with CDF $1-G(t)$ in order to obtain a lower bound on Shannon entropy ($\alpha = 1$). In general, there may not be an information random variable with this CDF. The next corollary incorporates this additional constraint in its application of Theorem~\ref{thm:1}.

\begin{corollary}\label{cor:3}
Let $(X,Y)$ be jointly distributed random variables supported on finite $\cal X$ and finite $\cal Y$. Further, let the parameter $\alpha \in [0,\infty)$. Then, 
\begin{align}
   H^\star_{\alpha}(P_{XY}) \geq \inf_{Q \in {\cal F}} H_\alpha (Q) = H_\alpha(Q^\ast)
\end{align}
where ${\cal F} = \{Q \colon Q \preceq_\imath P_{X|Y=y} \forall y \in {\cal Y}\}$ and $Q^\ast \in {\cal F}$ is as given in Lemma~\ref{lem:greedy}.
\end{corollary}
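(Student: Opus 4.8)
The plan is to combine Theorem~\ref{thm:1}, which controls the information spectrum of any feasible $Z$, with Lemma~\ref{lem:greedy}, which supplies a majorization-maximal element of ${\cal F}$, and then to invoke Schur-concavity of R\'enyi entropy to convert that maximality into entropy minimality.

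First I would show that every feasible $Z$ has its law inside ${\cal F}$. Fix any $Z$ satisfying~\eqref{eq:frl1} and~\eqref{eq:frl2}. By Theorem~\ref{thm:1}, for every $y \in \cY$ and every $t \in [0,\infty)$,
\begin{align}
\mathbb{P}[\imath_Z(Z) > t] \geq \sup_{y' \in \cY}\mathbb{P}[\imath_{X|Y}(X|Y) > t \mid Y=y'] \geq \mathbb{P}[\imath_{X|Y}(X|Y) > t \mid Y=y].
\end{align}
Reading this through the definition of $\preceq_\imath$ (with $U \sim P_Z$ and $V \sim P_{X|Y=y}$), it states exactly that $P_Z \preceq_\imath P_{X|Y=y}$, and since it holds for every $y$ we conclude $P_Z \in {\cal F}$. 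Consequently $H_\alpha(Z) \geq \inf_{Q \in {\cal F}} H_\alpha(Q)$, and taking the infimum over all feasible $Z$ gives $H^\star_\alpha(P_{XY}) \geq \inf_{Q \in {\cal F}} H_\alpha(Q)$.

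It then remains to evaluate the infimum over ${\cal F}$ and show it is attained at $Q^\ast$. By Lemma~\ref{lem:greedy} there is a $Q^\ast \in {\cal F}$ with $Q \preceq_m Q^\ast$ for every $Q \in {\cal F}$. The final ingredient is the classical fact that $H_\alpha$ is Schur-concave for every $\alpha \in [0,\infty)$, so that $Q \preceq_m Q^\ast$ forces $H_\alpha(Q) \geq H_\alpha(Q^\ast)$: for $\alpha \in [0,1)$ concavity of $x \mapsto x^\alpha$ makes $\sum_i q_i^\alpha$ Schur-concave and the positive prefactor $\tfrac{1}{1-\alpha}$ preserves the direction; for $\alpha \in (1,\infty)$ that sum is instead Schur-convex while the negative prefactor reverses it; and $\alpha=1$ (Shannon entropy) and $\alpha=0$ (log support size) are standard. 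Hence $Q^\ast$ minimizes $H_\alpha$ over ${\cal F}$, giving $\inf_{Q \in {\cal F}} H_\alpha(Q) = H_\alpha(Q^\ast)$, which combined with the previous bound yields the claim.

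The main obstacle I anticipate is bookkeeping rather than conceptual. I must track the direction of $\preceq_\imath$ so that Theorem~\ref{thm:1} really places $P_Z$ \emph{inside} ${\cal F}$ (the tail of $\imath_Z$ dominating the tails of the conditionals is precisely $P_Z \preceq_\imath P_{X|Y=y}$, not the reverse). I must also check that Lemma~\ref{lem:greedy} still applies when a feasible $P_Z$ carries larger, possibly countably infinite, support than the finite conditionals $P_{X|Y=y}$; the lemma asserts $Q \preceq_m Q^\ast$ for \emph{all} $Q \in {\cal F}$, so this case is covered, but it is the point that warrants a careful reading. The Schur-concavity step is classical and only needs to be stated uniformly across $\alpha$.
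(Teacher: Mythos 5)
Your proof is correct and follows essentially the same route as the paper's own argument: Theorem~\ref{thm:1} places the law of every feasible $Z$ inside ${\cal F}$, Lemma~\ref{lem:greedy} supplies the majorization-maximal element $Q^\ast$, and Schur concavity of R\'enyi entropy (cited in the paper, spelled out by you for each range of $\alpha$) converts that maximality into entropy minimality over ${\cal F}$. Your extra care about the direction of $\preceq_\imath$ and about members of ${\cal F}$ with enlarged or countable support only makes explicit what the paper leaves implicit.
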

\begin{proof}
The first inequality is a restatement of Theorem~\ref{thm:1} using the relation $\preceq_\imath$. According to Lemma~\ref{lem:greedy}, $Q^\ast$ majorizes every distribution in $\cal F$. R\'{e}nyi Entropy is Schur concave~\cite{schur} with respect to majorization: that is, if $Q \preceq_m P$ then $H_\alpha(Q) \geq H_\alpha(P)$. Thus, $Q^\ast$ achieves the infimum over $\cal F$.
\end{proof}
{{
\begin{remark}\label{rem:11}
Additionally, Note that for $\alpha =1$, these result have also been shown in a parallel work~\cite{cmp} using `profile method'. Our Corollary~\ref{cor:1} and Corollary~\ref{cor:3} are equivalent to Theorem $3.3$ and Theorem $5.5$ in~\cite{cmp}, respectively. The authors also showed that these results also hold  for a concave function $F:\mathbb{R}^d\rightarrow \mathbb{R}$, such that $F(p_1,p_2,p_3,...,p_n):=\sum_{i=1}^{n}f(p_i)$, where $f:\mathbb{R}\rightarrow\mathbb{R}$ is a non-negative and concave function satisfying $f(0)=0$.
 \end{remark}}}%

\section{Comparison of Lower Bounds}\label{sec:comp}
\pgfplotsset{plot coordinates/math parser=false}
\newlength\figureheight
\newlength\figurewidth

%In this section we compare the information spectrum lower bounds on entropy in Corollaries~\ref{cor:1} and~\ref{cor:3} to the lower bounds~(\ref{eq:cheuk}) and~(\ref{eq:yana2}). Corollary~\ref{cor:3} provably outperforms all previously known lower bounds, while Corollary~\ref{cor:1} outperforms previous bounds in many cases of interest.

\subsection{Analytical Comparison}

The following theorem compares the lower bounds in Corollary~\ref{cor:1} and~\ref{cor:3} to (\ref{eq:cheuk}) and~(\ref{eq:yana2}).% and clarifies the role of the relations $\preceq_m$ and $\preceq_\imath$. 
\begin{theorem}\label{thm:compare}
Let $(X,Y)$ be jointly distributed random variables supported on finite $\cal X$ and finite $\cal Y$. Let 
\begin{align}
{\cal F} = \{Q \colon Q \preceq_\imath P_{X|Y=y} \quad \forall y \in {\cal Y}\},
\end{align}
with $Q^\ast \in {\cal F}$ as in Lemma~\ref{lem:greedy}. Let $K_\alpha(P_{XY})$ be as in Corollary~\ref{cor:1}. We make the following statements for all $\alpha \in [0,\infty)$:
\begin{enumerate}
\item In general, we have that
\begin{align}
H_\alpha(Q^\ast) &\geq  K_\alpha(P_{XY}) \geq  \sup_{y \in \cY} H_\alpha(P_{X|Y=y} ) \label{eq:compare1c}
\end{align}
and,
\begin{align}
H_\alpha(Q^\ast) &\geq   H_{\alpha}(\land_{y \in \cY}P_{X|Y=y}) \label{eq:compare1d}\\
&\geq \sup_{y \in \cY}  H_\alpha(P_{X|Y=y}).\label{eq:compare1e}
\end{align}
\item  If there exists $y \in \cY$ such that $P_{X|Y=y} \preceq_\imath P_{X|Y= \hat y}$ for all ${\hat y} \in \cY$, then
\begin{align}
H_\alpha(Q^\ast)& = K_\alpha (P_{XY}) \label{eq:compare1aa}\\
&=  H_{\alpha}(\land_{y \in \cY}P_{X|Y=y}) \label{eq:compare1ab}\\
& =  \sup_{y \in \cY} H_\alpha(P_{X|Y=y} ).\label{eq:compare1ac} 
\end{align}
\item  If there exists $y \in \cY$ such that $P_{X|Y=y} \preceq_m P_{X|Y= \hat y}$ for all ${\hat y} \in \cY$, then
\begin{align}
H_{\alpha}(\land_{y \in \cY}P_{X|Y=y}) = \sup_{y \in \cY} H_\alpha(P_{X|Y=y}). \label{eq:compare1b}
\end{align}
In particular, in light of~\eqref{eq:compare1c}, this implies that
\begin{align}
H_\alpha(Q^\ast) &\geq  K_\alpha(P_{XY}) \geq H_{\alpha}(\land_{y \in \cY}P_{X|Y=y}).
\end{align}
\end{enumerate}
\end{theorem}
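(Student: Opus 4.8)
The plan is to prove the three parts of Theorem~\ref{thm:compare} largely by combining the information-spectrum relation $\preceq_\imath$ with its majorization consequence (Lemma~\ref{lem:preceq}), the Schur concavity of R\'enyi entropy, and the extremal characterization of $Q^\ast$ (Lemma~\ref{lem:greedy}). Throughout, the key structural fact is the chain $Q \preceq_\imath P \Rightarrow Q \preceq_m P \Rightarrow H_\alpha(Q) \geq H_\alpha(P)$, together with the fact that $Q^\ast$ is the $\preceq_m$-largest element of $\cal F$.

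For Part~1, I would first establish~\eqref{eq:compare1c}. The rightmost inequality $K_\alpha(P_{XY}) \geq \sup_y H_\alpha(P_{X|Y=y})$ should follow directly from Corollary~\ref{cor:1}, since $K_\alpha$ is built from the pointwise supremum $G(t) = \sup_y \mathbb{P}[\imath_{X|Y}(X|Y)>t|Y=y]$, and each individual tail $\mathbb{P}[\imath_{X|Y}(X|Y)>t|Y=y]$ integrates (via the expectation-from-CDF identity~\eqref{eq:excdf}, in the appropriate transformed form) to exactly $H_\alpha(P_{X|Y=y})$; taking the supremum inside the integrand only increases the value. The leftmost inequality $H_\alpha(Q^\ast) \geq K_\alpha(P_{XY})$ is the more delicate one: here $Q^\ast$ is the distribution whose tail is the largest realizable information spectrum dominated by $G$, so $1 - \mathbb{F}_{\imath_{Q^\ast}}(t) \leq G(t)$ pointwise, and since R\'enyi entropy is monotone in the integrated tail (for both $\alpha<1$ and $\alpha>1$ after accounting for the sign of $1-\alpha$), integrating this inequality yields $H_\alpha(Q^\ast) \geq K_\alpha$. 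For~\eqref{eq:compare1d}, I note that $\land_y P_{X|Y=y}$ is by definition the $\preceq_m$-greatest lower bound, whereas $Q^\ast$ is dominated by each $P_{X|Y=y}$ in the stronger order $\preceq_\imath$ and hence (by Lemma~\ref{lem:preceq}) in $\preceq_m$; thus $Q^\ast$ is a majorization lower bound of the family, so $Q^\ast \preceq_m \land_y P_{X|Y=y}$, and Schur concavity gives~\eqref{eq:compare1d}. Finally~\eqref{eq:compare1e} is immediate because $\land_y P_{X|Y=y} \preceq_m P_{X|Y=y}$ for each $y$, so its entropy dominates each $H_\alpha(P_{X|Y=y})$ and hence their supremum.

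For Part~2, the hypothesis is that some $y$ is $\preceq_\imath$-minimal in the family. The strategy is a squeeze argument: the minimal $P_{X|Y=y}$ itself lies in $\cal F$ (it is $\preceq_\imath$-dominated by every member, including itself trivially), so $H_\alpha(Q^\ast) = \inf_{Q\in\cal F} H_\alpha(Q) \leq H_\alpha(P_{X|Y=y}) \leq \sup_{\hat y} H_\alpha(P_{X|Y=\hat y})$. Combined with the general inequalities of Part~1, which run the other direction, all four quantities~\eqref{eq:compare1aa}--\eqref{eq:compare1ac} are forced to coincide. I also need that the minimal element's information spectrum equals $G$ pointwise so that $K_\alpha$ is pinned to the same value; this holds because $G(t)$ is the supremum of tails and the $\preceq_\imath$-minimal element has the largest tail, so the supremum is attained by that single $y$.

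For Part~3, the hypothesis is weaker: some $y$ is minimal only in the $\preceq_m$ order. Here I would argue that a $\preceq_m$-minimal element of a family is itself the greatest lower bound, i.e. $P_{X|Y=y} = \land_{\hat y} P_{X|Y=\hat y}$, since any lower bound of the family is in particular $\preceq_m P_{X|Y=y}$ and $P_{X|Y=y}$ is itself a lower bound. This immediately gives $H_\alpha(\land_{\hat y} P_{X|Y=\hat y}) = H_\alpha(P_{X|Y=y}) = \sup_{\hat y} H_\alpha(P_{X|Y=\hat y})$, establishing~\eqref{eq:compare1b}; the concluding display then follows by substituting into~\eqref{eq:compare1c}. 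The main obstacle I anticipate is the leftmost inequality of Part~1, $H_\alpha(Q^\ast) \geq K_\alpha$: it requires carefully checking that passing from the idealized (possibly non-realizable) spectrum $G$ to the realizable spectrum of $Q^\ast$ only increases R\'enyi entropy, and that the monotonicity direction is correct in the $\alpha>1$ regime where the factor $1-\alpha$ is negative; this is where I would spend the most care rather than treating it as routine.
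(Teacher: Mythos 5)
Your overall architecture matches the paper's proof: the chain $Q^\ast \preceq_m \land_{y\in\cY} P_{X|Y=y} \preceq_m P_{X|Y=y}$ plus Schur concavity for \eqref{eq:compare1d}--\eqref{eq:compare1e} and \eqref{eq:compare1b}, and a sup-versus-integral exchange for the right half of \eqref{eq:compare1c}. However, the step you yourself flag as the crux --- $H_\alpha(Q^\ast) \geq K_\alpha(P_{XY})$ --- is argued from a pointwise inequality stated in the wrong direction, and as written your argument would prove the opposite. You claim $1-\mathbb{F}_{\imath_{Q^\ast}}(t) \leq G(t)$, i.e., that the information-spectrum tail of $Q^\ast$ is dominated by $G$. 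Under the paper's definition, $Q \preceq_\imath P$ means $\mathbb{P}[\imath_U(U) > t] \geq \mathbb{P}[\imath_V(V) > t]$ for $U\sim Q$, $V\sim P$; so membership $Q^\ast \in {\cal F}$ means the tail of $Q^\ast$ dominates \emph{each} conditional tail, hence dominates their supremum: $1-\mathbb{F}_{\imath_{Q^\ast}}(t) \geq G(t)$ for all $t$. In other words, $G$ is a lower envelope for the realizable spectra in ${\cal F}$, and the greedy $Q^\ast$ hugs it from above, not from below. From your stated premise, integrating would yield $H_\alpha(Q^\ast) \leq K_\alpha(P_{XY})$, not $\geq$. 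With the corrected direction the conclusion is immediate and not delicate: one repeats the integration argument of Corollary~\ref{cor:1} verbatim, with the tail of $Q^\ast$ playing the role of the lower bound that Theorem~\ref{thm:1} supplied there. This is exactly what the paper does in \eqref{eq:pos1}--\eqref{eq:pos2} for $\alpha=1$, the other values of $\alpha$ following by the same $2^{(1-\alpha)t}$ change of variables; your concern about the sign of $1-\alpha$ is legitimate but is already handled inside the proof of Corollary~\ref{cor:1}, so nothing new is needed.

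Two further remarks. Your Part~2 is a genuinely different route from the paper's: the paper observes that a $\preceq_\imath$-minimal $P_{X|Y=y}$ has the same information spectrum as $Q^\ast$, so \eqref{eq:compare2aa} and \eqref{eq:compare2ab} become equalities; you instead note that the minimal element lies in ${\cal F}$, so $H_\alpha(Q^\ast) = \inf_{Q \in {\cal F}} H_\alpha(Q) \leq H_\alpha(P_{X|Y=y}) \leq \sup_{\hat y} H_\alpha(P_{X|Y=\hat y})$ (the first equality being Corollary~\ref{cor:3}, i.e., Lemma~\ref{lem:greedy} plus Schur concavity) and squeeze against Part~1. This cleanly pins \eqref{eq:compare1ab} and \eqref{eq:compare1ac} using only \eqref{eq:compare1d}--\eqref{eq:compare1e}, which you proved correctly; but the equality with $K_\alpha$ in \eqref{eq:compare1aa} still requires the leftmost inequality of \eqref{eq:compare1c}, so it inherits the error above until that step is repaired. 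Your Part~3 argument --- a $\preceq_m$-minimal element is itself the greatest lower bound, hence coincides with $\land_{y\in\cY} P_{X|Y=y}$ up to relabeling --- is correct and is in substance the paper's appeal to majorization and Schur concavity.
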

\begin{proof}
To see~\eqref{eq:compare1d}, note that for all $y\in \cY$ we have the following relation
\begin{align}
Q^\ast \preceq_m \land_{y \in \cY}P_{X|Y=y} \preceq_m P_{X|Y=y}.
\end{align}
The relations holds by Lemma~\ref{lem:preceq} together with $Q^\ast \in \cal F$, and the fact that $\land_{y \in \cY}P_{X|Y=y}$ is the greatest lower bound with respect to majorization. Therefore,~\eqref{eq:compare1d} holds by Schur concavity of R\'{e}nyi entropy. As already observed in~\cite{cicalese},~\eqref{eq:compare1e} and ~\eqref{eq:compare1b} hold by the same application of majorization and Schur concavity.
%As observed in~\cite{cicalese}, the second relation holds since $\land_{y \in \cY}P_{X|Y=y}$ is the greatest lower bound with respect to majorization of $\{P_{X|Y=y}\}_{y\in \cY}$. 
%Therefore,~\eqref{eq:compare1d} holds by Schur concavity of R\'{e}nyi entropy. If there exists $y \in \cY$ such that $P_{X|Y=y} \preceq_m P_{X|Y= \hat y}$ for all ${\hat y} \in \cY$, then again by by Schur concavity $P_{X|Y=y}$ will maximize R\'{e}nyi entropy for all $\alpha$ and this gives~\eqref{eq:compare1b}.

Let $U\sim Q^\ast$. We state the rest of the proof for the $\alpha = 1$ case. The other cases follow analogously.
For~\eqref{eq:compare1c}, we have that
\begin{align}
H(Q^\ast) & =  \int_{0}^{\infty} \mathbb{P}[\imath_U(U) > t] dt \label{eq:pos1}\\
&\geq \int_{0}^{\infty}\sup_{y \in \cY}\mathbb{P}[\imath_{X|Y}(X|Y)>t|Y=y]dt \label{eq:compare2aa}\\
&\geq \sup_{y \in \cY}\int_{0}^{\infty}\mathbb{P}[\imath_{X|Y}(X|Y)>t|Y=y]dt\label{eq:compare2ab} \\
&= \sup_{y \in \cY} H(P_{X|Y=y})\label{eq:pos2}
\end{align}
where~\eqref{eq:compare2aa} follows since $Q^\ast \in {\cal F}$, and~\eqref{eq:pos1} and~\eqref{eq:pos2} follow from~\eqref{eq:excdf}. If there exists $y \in \cY$ such that $P_{X|Y=y} \preceq_\imath P_{X|Y= \hat y}$ for all ${\hat y} \in \cY$ then $P_{X|Y=y}$ will have the same information spectrum as $Q^\ast$, and~\eqref{eq:compare2aa} and~\eqref{eq:compare2ab} become equalities. This shows that $H_\alpha(Q^\ast) = K_\alpha (P_{XY})= \sup_{y \in \cY} H_\alpha(P_{X|Y=y})$.
Finally,~\eqref{eq:compare1ab} holds by~\eqref{eq:compare1b} and Lemma~\ref{lem:preceq}.
\end{proof}

% We summarize these analytical results by highlighting that there are three distinct possibilities for how the bounds in Corollaries
% ~\ref{cor:1} and~\ref{cor:2}, as well as see~(\ref{eq:cheuk}) and~(\ref{eq:yana2}) could behave. 
% \begin{enumerate}
% \item First, observe that in light of Lemma~\ref{lem:preceq}, if~(\ref{eq:compare1}) and~(\ref{eq:compare2}) hold with equality, then~(\ref{eq:maj}) also holds with equality. That is, if the lower bound with respect to $\preceq_\imath$ is in the set $\{P_{X|S=s}\}_{s \in \cS}$, then all three bounds are the same.
% \item Secondly, observe that if~(\ref{eq:maj}) holds with equality, but ~\ref{eq:compare1}) and~(\ref{eq:compare2}) do not hold with equality, then the information spectrum converse outperforms the other lower bounds. This happens when the lower bound with respect to the majorization $\preceq_m$ is in the set $\{P_{X|S=s}\}_{s \in \cS}$, but it is not a lower bound with respect to the relation $\preceq_\imath$.
% \item Finally, when the lower bound with respect to the majorization $\preceq_m$ is not in the set $\{P_{X|S=s}\}_{s \in \cS}$, Corollaries
% ~\ref{cor:1} and~\ref{cor:2}, as well as~(\ref{eq:cheuk}), outperform ~(\ref{eq:yana2}). In this case, numerical simulations show that Corollaries
% ~\ref{cor:1} and~\ref{cor:2} perform better for low value of $\alpha$ while~(\ref{eq:cheuk}) performs better for high values of $\alpha$. These simulations are discussed next. 
% \end{enumerate}

\subsection{Numerical Comparison}
%Add to the tikz code to scale
%[scale=0.4, transform shape]

As we see in Theorem~\ref{thm:compare}, the lower bound in Corollary~\ref{cor:3} outperforms all other lower bounds. It is not clear how Corollary~\ref{cor:1} compares to~\eqref{eq:cheuk}, in general. Numerical evaluations show that Corollary~\ref{cor:1} outperforms~\eqref{eq:cheuk} for low values of $\alpha$, but is outperformed by~\eqref{eq:cheuk} for high values of $\alpha$. This is demonstrated by the following numerical example which has been carefully chosen to highlight the nuances between the different lower bounds.

\begin{figure}
\centering
\input{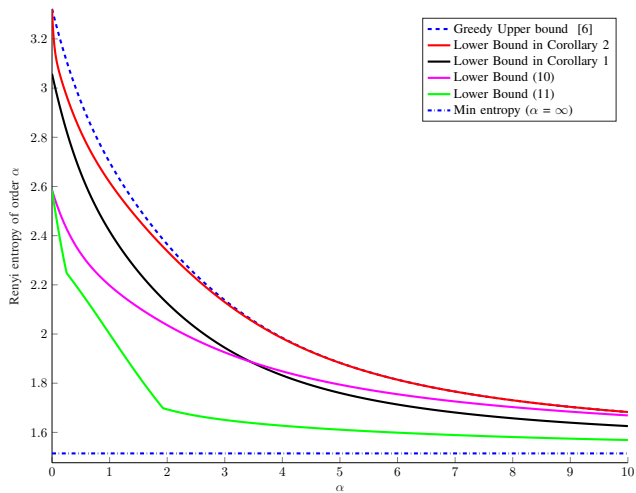}
\caption{Lower bound on R\'{e}nyi entropy for the collection of distributions in Example~\ref{example:1}. Note that Corollary~\ref{cor:3} outperforms all the other lower bounds, while Corollary~\ref{cor:1} outperforms~\eqref{eq:cheuk} for $\alpha \leq 3.445$.}
\label{fig:f1}
\end{figure}
\begin{example}[R\'{e}nyi entropy] \label{example:1}
Let $(X,Y)$ be jointly distributed random variables with $\cX = \{1, \dots, 6\}$, $\cY = \{y_1, y_2, y_3\}$ and the conditional PMF, 
\begin{align}
P_{X|Y}(\cdot|y_1) &= [0.5, 0.125, 0.125, 0.125, 0.125, 0]\\
P_{X|Y}(\cdot |y_2) &= [0.4, 0.4, 0.1, 0.1, 0, 0]\\
P_{X|Y}(\cdot|y_3) &= [0.35, 0.35, 0.25, 0.04, 0.005, 0.005].
\end{align}
Then,
\begin{align*}
&\land_{y \in \cY}P_{X|Y=y} \!=\! [0.35, 0.275, 0.125, 0.125, 0.12, 0.005],\\
&Q^\ast \! =\! [0.35, 0.15, 0.125, 0.125, 0.1, 0.1, 0.04, 0.005, 0.005].
\end{align*}
The comparison of lower bounds on R\'{e}nyi entropy for this example is provided in Figure~\ref{fig:f1}.
\end{example}

Observe that the lower bounds~\eqref{eq:cheuk} and~\eqref{eq:yana2} cannot exceed $\log |\cX|$ because they are limited to distribution with support size of $\cX$. Corollaries~\ref{cor:1} and~\ref{cor:3}, however, take into account possible support size enlargement for the functional representation variable $Z$. This can be seen in Figure~\ref{fig:f1} for $\alpha = 0$ (that is, log of the support size). This advantage of Corollaries~\ref{cor:1} and~\ref{cor:3} is even more apparent for binary supported distribution in the following example, see also Figure~\ref{fig:f2}.

\begin{figure}
\centering
% This file was created by matlab2tikz.
%
%The latest updates can be retrieved from
%  http://www.mathworks.com/matlabcentral/fileexchange/22022-matlab2tikz-matlab2tikz
%where you can also make suggestions and rate matlab2tikz.
%
\definecolor{mycolor1}{rgb}{1.00000,0.00000,1.00000}%
\begin{tikzpicture}[scale=0.6, transform shape]

\begin{axis}[%
width=6.028in,
height=4.754in,
at={(1.011in,0.642in)},
scale only axis,
xmin=0,
xmax=0.5,
xlabel style={font=\color{white!15!black}},
xlabel={p},
ymin=0.4,
ymax=1.4,
ylabel style={font=\color{white!15!black}},
ylabel={Shannon Entropy},
axis background/.style={fill=white},
legend style={at={(0.03,0.97)}, anchor=north west, legend cell align=left, align=left, draw=white!15!black}
]
\addplot [color=blue, dashed, very thick]
  table[row sep=crcr]{%
0	0.499915958164528\\
0.01	0.548260626725894\\
0.02	0.575160186084823\\
0.03	0.592904561192996\\
0.04	0.603938591670598\\
0.05	0.609259281426993\\
0.06	0.609259281426993\\
0.07	0.603938591670599\\
0.08	0.592904561192996\\
0.09	0.575160186084823\\
0.1	0.548260626725895\\
0.11	0.499915958164528\\
0.12	0.579018887323801\\
0.13	0.637957870336228\\
0.14	0.689182147659183\\
0.15	0.735336416007576\\
0.16	0.777675671846095\\
0.17	0.816938233663405\\
0.18	0.853611223393761\\
0.19	0.888040209388076\\
0.2	0.920482985684924\\
0.21	0.951139111175974\\
0.22	0.980167502961966\\
0.23	1.00769757530301\\
0.24	1.03383663814916\\
0.25	1.05867500476465\\
0.26	1.08228963181792\\
0.27	1.10474678374221\\
0.28	1.1261040276736\\
0.29	1.14641175661668\\
0.3	1.16571437230886\\
0.31	1.18405121757021\\
0.32	1.20145732089085\\
0.33	1.21796399801771\\
0.34	1.23359934305493\\
0.35	1.24838863308305\\
0.36	1.26235466428094\\
0.37	1.27551803320299\\
0.38	1.28789737370062\\
0.39	1.29950955763513\\
0.4	1.31036986577252\\
0.41	1.32049213391858\\
0.42	1.3298888783307\\
0.43	1.33857140365072\\
0.44	1.34654989598407\\
0.45	1.35383350326094\\
0.46	1.36043040462561\\
0.47	1.36634787028689\\
0.48	1.37159231300854\\
0.49	1.3761693322106\\
0.5	1.38008375148098\\
};
\addlegendentry{Greedy Upper bound ~\cite{eci1}}

\addplot [color=red, only marks, mark=asterisk, mark options={solid, red}, very thick]
  table[row sep=crcr]{%
0	0.499915958164533\\
0.03	0.592904561192997\\
0.06	0.609259281426994\\
0.09	0.575160186084823\\
0.12	0.579018887323801\\
0.15	0.735336416007576\\
0.18	0.853611223393761\\
0.21	0.951139111175974\\
0.24	1.03383663814916\\
0.27	1.10474678374221\\
0.3	1.16571437230886\\
0.33	1.21796399801771\\
0.36	1.26235466428094\\
0.39	1.29950955763513\\
0.42	1.3298888783307\\
0.45	1.35383350326094\\
0.48	1.37159231300854\\
};
\addlegendentry{Lower Bound in Corollary~\ref{cor:3}}

\addplot [color=black, dashdotted, ultra thick]
  table[row sep=crcr]{%
0	0.499915958164528\\
0.01	0.534510274350901\\
0.02	0.549104590537274\\
0.03	0.556150031702012\\
0.04	0.55829322291002\\
0.05	0.556791134352025\\
0.06	0.552384105239497\\
0.07	0.545561326925107\\
0.08	0.536670487655512\\
0.09	0.525971553712077\\
0.1	0.513666310539521\\
0.11	0.499915958164528\\
0.12	0.543169262316589\\
0.13	0.583949075973407\\
0.14	0.62251037501909\\
0.15	0.659060792183235\\
0.16	0.693772076590463\\
0.17	0.726788213231654\\
0.18	0.758231317836831\\
0.19	0.788206008312589\\
0.2	0.816802707274869\\
0.21	0.844100178386834\\
0.22	0.870167502961965\\
0.23	0.895065640662706\\
0.24	0.918848676413747\\
0.25	0.94156482728425\\
0.26	0.963257263438035\\
0.27	0.983964783396054\\
0.28	1.00372237393636\\
0.29	1.02256167775333\\
0.3	1.04051138669753\\
0.31	1.05759757447031\\
0.32	1.07384397967439\\
0.33	1.08927224785905\\
0.34	1.10390213946046\\
0.35	1.11775170918933\\
0.36	1.13083746136404\\
0.37	1.14317448485615\\
0.38	1.15477657065499\\
0.39	1.16565631452983\\
0.4	1.17582520684218\\
0.41	1.18529371121571\\
0.42	1.19407133348921\\
0.43	1.20216668214605\\
0.44	1.20958752122206\\
0.45	1.21634081653397\\
0.46	1.22243277593639\\
0.47	1.22786888420218\\
0.48	1.23265393302443\\
0.49	1.23679204655538\\
0.5	1.24028670282512\\
};
\addlegendentry{Lower Bound in Corollary~\ref{cor:1}}

\addplot [color=mycolor1, only marks, mark=+, mark options={solid, mycolor1}, very thick]
  table[row sep=crcr]{%
0	0.499915958164528\\
0.03	0.499915958164528\\
0.06	0.499915958164528\\
0.09	0.499915958164528\\
0.12	0.529360865287364\\
0.15	0.6098403047164\\
0.18	0.68007704572828\\
0.21	0.741482739931274\\
0.24	0.795040279384522\\
0.27	0.841464636208175\\
0.3	0.881290899230693\\
0.33	0.914926372779727\\
0.36	0.942683189255492\\
0.39	0.964799548505087\\
0.42	0.981453895033654\\
0.45	0.992774453987808\\
0.48	0.998845535995202\\
};
\addlegendentry{Lower Bound~(\ref{eq:cheuk})}

\addplot [color=green, very thick]
  table[row sep=crcr]{%
0	0.499915958164528\\
0.01	0.499915958164528\\
0.02	0.499915958164528\\
0.03	0.499915958164528\\
0.04	0.499915958164528\\
0.05	0.499915958164528\\
0.06	0.499915958164528\\
0.07	0.499915958164528\\
0.08	0.499915958164528\\
0.09	0.499915958164528\\
0.1	0.499915958164528\\
0.11	0.499915958164528\\
0.12	0.529360865287364\\
0.13	0.557438185027989\\
0.14	0.584238811642856\\
0.15	0.6098403047164\\
0.16	0.634309554640566\\
0.17	0.65770477874422\\
0.18	0.68007704572828\\
0.19	0.701471459883897\\
0.2	0.721928094887362\\
0.21	0.741482739931274\\
0.22	0.760167502961966\\
0.23	0.778011303546538\\
0.24	0.795040279384522\\
0.25	0.811278124459133\\
0.26	0.826746372492618\\
0.27	0.841464636208175\\
0.28	0.855450810560131\\
0.29	0.868721246339405\\
0.3	0.881290899230693\\
0.31	0.893173458377857\\
0.32	0.904381457724494\\
0.33	0.914926372779727\\
0.34	0.92481870497303\\
0.35	0.934068055375491\\
0.36	0.942683189255492\\
0.37	0.950672092687066\\
0.38	0.9580420222263\\
0.39	0.964799548505087\\
0.4	0.970950594454669\\
0.41	0.976500468757824\\
0.42	0.981453895033654\\
0.43	0.98581503717892\\
0.44	0.989587521222056\\
0.45	0.992774453987808\\
0.46	0.995378438820226\\
0.47	0.997401588567739\\
0.48	0.998845535995202\\
0.49	0.99971144175281\\
0.5	1\\
};
\addlegendentry{Lower Bound~(\ref{eq:yana2})}

\end{axis}
\end{tikzpicture}%
\caption{
Lower bound on Shannon entropy for collections of distributions (indexed by $p$) in Example~\ref{example:2}. Note that Corollary~\ref{cor:3} matches the upper bound and is therefore exactly $H_1^\star(P_{XY})$. Lower bounds~\eqref{eq:cheuk} and~\eqref{eq:yana2} match for this example. 
}
\label{fig:f2}
\end{figure}
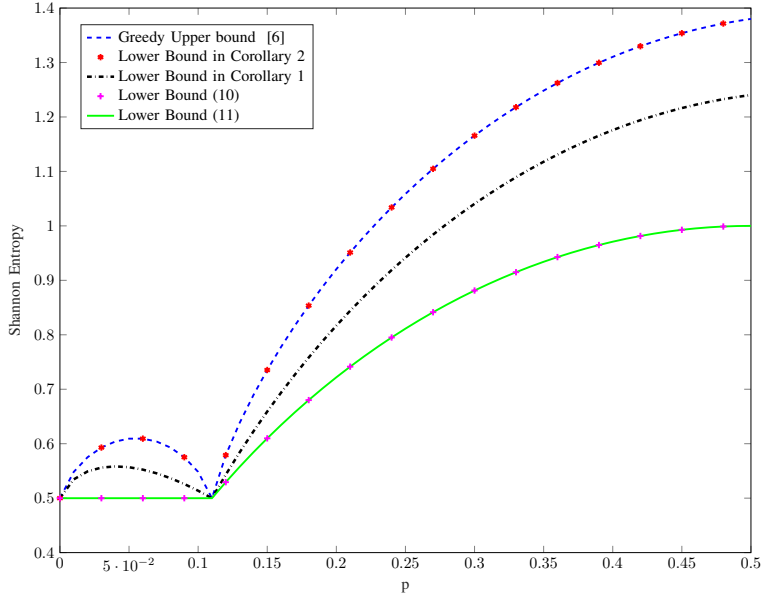

\begin{example}[Binary Support Size]\label{example:2}
Let $(X,Y)$ be jointly distributed random variables with $|\cX| = |\cY|=2$ and the conditional PMF, 
\begin{align}
P_{X|Y}(\cdot|y_1) &= [0.9, 0.1] \mbox{ and }P_{X|Y}(\cdot |y_2) = [1-p, p]
\end{align}
where $p\in[0,0.5]$. Then,
\begin{align}
\land_{y \in \cY}P_{X|Y=y} =\begin{cases}
[0.9, 0.1], &p\leq 0.1 \\
[1-p, p],  & p>0.1
\end{cases}
\end{align}
and 
\begin{align}
Q^\ast =\begin{cases}
[0.9, 0.1-p, p], &p\leq 0.1 \\
[1-p, p-0.1, 0.1],  & p>0.1.
\end{cases}
\end{align}
The comparison of lower bounds on Shannon entropy for this example is provided in Figure~\ref{fig:f2}.
\end{example}

\subsection{Concluding Remarks}
In order to provide context for the lower bounds in Figures~\ref{fig:f1} and~\ref{fig:f2} we include a plot of an upper bound. A number of theoretical upper bounds are available, see for example,~\cite {cheuk,cicalese,yana,eci1}. We implement the construction from~\cite{eci1} for our examples. We leave the question of improved theoretical upper bounds to future work and believe that combining information spectrum and majorization techniques could also be a fruitful strategy for this problem.
\section{Proof of Theorem~\ref{thm:1}}\label{sec:proofs}
%\subsection{Proofs}
\begin{proof}[Theorem~\ref{thm:1}]
Let's begin by fixing a $y \in \cY$ and a real-valued $t \in [0,\infty)$.
Now, let us define 
\begin{align}
\cX_y&:=\{x \in \cX: {P}_{X|Y}(x|y)>0\}\\
\mbox{ and }\cX_{y,t}&:=\{x \in \cX_y: {P}_{X|Y}(x|y)< 2^{-t} \} \\
&=\{x \in \cX_y: t<{\imath}_{X|Y}(x|y)<\infty \}.
\end{align}
Define also
\begin{align}
\cZ_{t}&:=\{z \in \cZ: {P}_{Z}(z)< 2^{-t} \} =\{z \in \cZ: {\imath}_{Z}(z)>t \}.   
\end{align}

Recall that $X$ is a deterministic function of $Y$ and $Z$ i.e., for some $g:\cY \times \cZ \rightarrow \cX$, we have $X=g(Y,Z)$. Further, define $\cZ_{y}(x):=\{z \in \cZ: g(y,z) = x\}$ and observe that
\begin{align}
P_{X|YZ}(x|y,z)=
\begin{cases}
1  &,\mbox{ if } z \in \cZ_{y}(x) \\
0    &,\mbox{ if } z \not \in \cZ_{y}(x) \label{eq:p1}
\end{cases}
\end{align}
i.e., for $y\in \cY$ fixed as above, every element of $\cX_y$ maps to a disjoint subset of $\cZ$. Therefore,
\begin{align}
     {P}_{X|Y}(x|y)&=\sum_{z \in \cZ}P_{X|YZ}(x|y,z){P}_{Z|Y}(z|y) \label{eq:p2}\\
 &=\sum_{z \in \cZ_{y}(x)}P_{X|YZ}(x|y,z){P}_{Z|Y}(z|y) \label{eq:p3}\\
  &\stackrel{(a)}{=}\sum_{z \in \cZ_{y}(x)}{P}_{Z|Y}(z|y)\stackrel{(b)}{=}\sum_{z \in \cZ_y(x)}{P}_{Z}(z).\label{eq:7a}
\end{align}
Eq.~\eqref{eq:p2} follows by the law of total probability. Eq.~\eqref{eq:p3} and ~\eqref{eq:7a}(a) follow from~\ref{eq:p1}, while~\eqref{eq:7a}(b) follows from the fact that $Y$ and $Z$ are independent. Putting everything together,
\begin{align}
\mathbb{P}&[\imath_{X|Y}(X|Y)>t|Y=y]
    =\sum_{x \in \cX_{y,t}}{P}_{X|Y}(x|y)\\
    &\stackrel{(a)}{=}\sum_{x \in \cX_{y,t}}\sum_{z \in \cZ_{y}(x)}{P}_{Z}(z) \stackrel{(b)}{=} \sum_{z \in \bigcup_{x \in \cX_{y,t}}\cZ_y(x)}{P}_{Z}(z)\label{eq:aa} \\
    &\stackrel{}{\leq} \sum_{z \in \cZ_{t}}{P}_{Z}(z) \stackrel{}{=}\mathbb{P}[\imath_{Z}(Z) > t].\label{eq:ac}
\end{align}
Eq.~(\ref{eq:aa})(a) follows from~(\ref{eq:7a}), Eq ~(\ref{eq:aa})(b) follows since every element of $\cX_y$ maps to a disjoint subset of $\cZ$, Eq.~(\ref{eq:ac}) follows from noting that $\forall$ $z \in \bigcup_{x \in \cX_{y,t}}\cZ_{y}(x)$,  we have $P_{Z}(z) < 2^{-t}$ and therefore $\bigcup_{x \in \cX_{y,t}}\cZ_y(x) \subseteq \cZ_{t}$.\\

Thus, we have shown that
\begin{align}
\mathbb{P}\left[\imath_{Z}(Z) > t\right] \geq    \mathbb{P}\left[\imath_{X|Y}(X|Y) > t|Y=y\right].  \label{eq:f}
\end{align}
Since (\ref{eq:f}) holds for $\forall$ $y \in \cY$, it holds for the supremum over $\cY$.\\
\end{proof}
\clearpage
\balance

\bibliographystyle{IEEEtran}
\bibliography{IEEEabrv,ref.tex.bib}
\newpage
\appendix

\subsection{Proof of Lemma~\ref{lemma:1}:}
\begin{proof}[Lemma~\ref{lemma:1}]
Let the set $\cY:=\{y_1,y_2,y_3,\cdots,y_{|\cY|}\}$. Recall that $X$ is a deterministic function of $Y$ and $Z$. Thus,
\begin{align*}
    X=g(Y,Z).
\end{align*}
For a fixed $y \in \cY$, we define the random variable $X_y$ as
\begin{align*}
    X_y &:= g(y,Z)\\
    &=g_{y}(Z)
\end{align*}
for some function $g_y: \cZ\rightarrow\cX$.\\
Further, note that the random variable $X_y$ is distributed according to $P_{X|Y=y}$. Now,
\begin{align}
    H(Z,g_{y_1}(Z),g_{y_2}(Z),\cdots,g_{y_{|\cY|}}(Z))\notag\\
    &\hspace*{-50mm}=H(g_{y_1}(Z),g_{y_2}(Z),\cdots,g_{y_{|\cY|}}(Z))\notag\\ &\hspace*{-30mm}+ H(Z|g_{y_1}(Z),g_{y_2}(Z),\cdots,g_{y_{|\cY|}}(Z))\label{eq:ent1}.
\end{align}
Also,
\begin{align}
    H(Z,g_{y_1}(Z),g_{y_2}(Z),\cdots,g_{y_{|\cY|}}(Z))\notag\\
    &\hspace*{-50mm}=H(Z)+ H(g_{y_1}(Z),g_{y_2}(Z),\cdots,g_{y_{|\cY|}}(Z)|Z)\label{eq:ent2}\\
    &\hspace*{-50mm}=H(Z)\label{eq:ent3}.
\end{align}
From~\eqref{eq:ent1} and~\eqref{eq:ent3}, we have
\begin{align}
    H(Z)&\geq H(g_{y_1}(Z),g_{y_2}(Z),\cdots,g_{y_{|\cY|}}(Z))\notag \\
    &=H(X_{y_{1}},X_{y_{2}},\cdots,X_{y_{|\cY|}})\label{ent:5}\\
    &\geq \inf_{P_{X_1X_2\cdots X_m} \colon {X_i} \sim P_{i}} H(X_{y_{1}},X_{y_{2}},\cdots,X_{y_{|\cY|}})\label{ent:6}.
\end{align}
Note that Eq.~\eqref{ent:5} holds for every random variable $Z$ satisfying~\eqref{eq:frl1} and~\eqref{eq:frl2}, and for every joint distribution of the random variables $(X_{y_{1}},X_{y_{2}},\cdots,X_{y_{|\cY|}})$. Therefore, the best possible lower bound on $H(Z)$ is obtained by minimizing the joint entropy  $H(X_{y_{1}},X_{y_{2}},\cdots,X_{y_{|\cY|}})$, which is equivalent to solving the minimum entropy coupling problem given the random variables $X_{y_{1}},X_{y_{2}},\cdots,X_{y_{|\cY|}}$ distributed according to $P_{X|Y=y_1}, P_{X|Y=y_2},\cdots,P_{X|Y=y_{|\cY|}}$, respectively.
\begin{remark}
    The above proof can be easily extended for the R\'{e}nyi entropy of order $\alpha \in [0,1)\cup(1,\infty)$. \\Let $Z,g_{y_1}(Z),g_{y_2}(Z),\cdots,g_{y_{|\cY|}}$ be random variables as above, then
     \begin{align}
        H_{\alpha}(Z,g_{y_1}(Z),g_{y_2}(Z),\cdots,g_{y_{|\cY|}}(Z))&= H_{\alpha}(Z)\notag\\&\geq H_{\alpha}(g_{y_1}(Z),g_{y_2}(Z),\cdots,g_{y_{|\cY|}}).\label{eq:lkl}
    \end{align}
{
Note that Equation~\ref{eq:lkl} holds from noting that for any two random variables $X$ and $Y$, the following holds $H_{\alpha}(X,Y) \geq H_{\alpha}(X)$ and $H_{\alpha}(X,Y) \geq H_{\alpha}(Y)$. Further, if $Y=f(X)$ then $H_{\alpha}(X,Y) = H_{\alpha}(X) \geq H_{\alpha}(Y)$.}\\
    Thus, we have the following  result from~\eqref{eq:lkl}:
    \begin{align}
       H_{\alpha}(Z) \geq \inf_{P_{X_1X_2\cdots X_m} \colon {X_i} \sim P_{i}} H_{\alpha}(X_{y_{1}},X_{y_{2}},\cdots,X_{y_{|\cY|}}).
    \end{align}
\end{remark}
\end{proof}
\subsection{Proof of Lemma~\ref{lem:preceq} and~\ref{lem:greedy}}
\begin{proof}[Lemma~\ref{lem:preceq}]
Define
\begin{align}
t_i = \log \frac{1}{q_i} \mbox{ and } s_i = \log \frac{1}{p_i},
\end{align}
and note that $t_1, t_2, \dots$ and $s_1, s_2, \dots$ are both increasing sequences. We prove the claim by induction. 

For the base case, suppose $p_1 < q_1$ and note that this implies that $s_1 > t_1$. Then
\begin{align}
 q_1 &\leq \mathbb{P} \left[\imath_U(U) \leq t_1 \right]\leq \mathbb{P} \left[\imath_V(V) \leq t_1 \right] = 0
\end{align}
which is a contradiction. Therefore $p_1 \geq q_1$.

For the inductive step, assume that for some $k$
\begin{align}
 \sum_{i=1}^k q_i \leq \sum_{i=1}^k p_i.
\end{align}
If $p_{k+1} \geq q_{k+1}$, then we are done. Else, for the case $p_{k+1} < q_{k+1}$, we note that $s_{k+1} > t_{k+1}$. Then
\begin{align}
 \sum_{i=1}^{k+1} q_i&\leq \mathbb{P} \left[\imath_U(U) \leq t_{k+1} \right] \leq \mathbb{P} \left[\imath_V(V) \leq t_{k+1} \right] \\
 &=  \sum_{i=1}^{k} 1\{s_i \leq t_k\}p_i < \sum_{i=1}^{k+1} p_i. 
\end{align}
This completes the proof.
\end{proof}

\begin{proof}[Lemma~\ref{lem:greedy}]
Define
\begin{align}
G(t):=\max_{i \in \{1, \dots, m\}}\mathbb{P}[\imath_{S_i}(S_i)>t]
\end{align}
where $S_i \sim P_i$. Note $G(t)$ is an increasing step function that goes from zero to one in at most $\sum_{i=1}^m |\cS_i|$ steps. Also note that every $Q\in {\cal F}$ satisfies 
\begin{align}
\sum_{i=1}^{k} q_i \leq G\left(\log \frac{1}{q_k}\right) \label{eq:gp1}
\end{align}
where, again, we assume $q_1 \geq q_2 \geq q_3 \dots$.

We construct $Q^\ast = (q^\ast_1, q^\ast_2, q^\ast_3, \dots)$ with $q^\ast_1 \geq q^\ast_2 \geq q^\ast_3 \dots$ in a greedy way. First, assign 
\begin{align}
q^\ast_1 = \min_{i\in \{1, \dots, m\}} \left \{\max_{s\in \cS_i} P_i(s) \right\}.
\end{align}
This is the maximum value we can assign to $q^\ast_1$ while satisfying the information spectrum constraint~\eqref{eq:gp1}. We continue with this strategy: at each step $k$ we assign the maximum possible value to $q^\ast_k$ until the total probability of all $q^\ast_k$ reaches one. We claim that the algorithm will terminate in at most $2\sum_{i=1}^m |\cS_i|$ iterations. Indeed, for every `step' in $G(t)$, the algorithm can assign at most two probability values to $Q^\ast$ before it needs to start assigning smaller values and move on to the next `step'.\\
%{\color{red} Y.S.: need to clean up this discussion a bit still.}\\

We also claim that the resulting distribution $Q^\ast$ will majorize all other distributions that satisfy the information spectrum constraint~\eqref{eq:gp1}. To this end, let $Q\in {\cal F}$ be arbitrary such distribution. We show that $Q \preceq_m Q^\ast$ inductively. For the base case, we have that $q^\ast_1 \geq q_1$ since by construction we assigned the maximum possible value to $q^\ast_1$.\\
For the inductive step, assume that for some integer $k$ we have that $\sum_{i=1}^k q_k \leq \sum_{i=1}^k q^\ast_k$. If $q^\ast_{k+1} \geq q_{k+1}$, we are done. Otherwise, define
\begin{align}
t_{k+1} = \log \frac{1}{q^\ast_{k+1}} \mbox{ and } s_{k+1} = \log \frac{1}{q_{k+1}},
\end{align}
and note that $s_{k+1} < t_{k+1}$ and $G(s_{k+1}) \leq G(t_{k+1})$. Let
\begin{align}
g_{k+1} = G(s_{k+1}) - \sum_{i=1}^{k} q^\ast_i
\end{align}
and note that $g_{k+1} < q_{k+1}$. This is a consequence of the greedy construction; if this was not the case, we would assign $q_{k+1}$ to $q^\ast_{k+1}$. Also, $q^\ast_{k+1} \geq g_{k+1}$, again by the property of the greedy construction. This is because we could always assign the value of $g_{k+1}$ to $q^\ast_{k+1}$ without violating the information spectrum constraint. Thus we have that
\begin{align}
\sum_{i=1}^{k} q^\ast_i + q^\ast_{k+1} \geq G(s_{k+1}) \geq \sum_{i=1}^{k+1} q_i
\end{align}
and this completes the proof.

\end{proof}

\subsection{Proof of Corollary~\ref{cor:1}:} 
\begin{proof}[Corollary~\ref{cor:1}]
\begin{remark}\label{rem:fff}
    Note that the slightly modified version of Equation (\ref{eq:f}) i.e.,
    \begin{align}
    \mathbb{P}\Big[\imath_{Z}(Z) \geq t\Big] \geq \sup_{y \in \cY} \mathbb{P}\Big[\imath_{X|Y}(X|Y)\geq t|Y=y\Big]\label{eq:fff}
\end{align}
is also true. The proof follows via the same construction except the fact that the definitions of the sets $\cX_{s,t}$ and $\cZ_t$ does not involves the strict inequality.
\end{remark}

From equation~\eqref{eq:ha}, we have
\begin{align}
H_{\alpha}(Z)=&\frac{1}{1-\alpha}\log\Big(\mathbb{E}[\imath^{\alpha}_Z(Z)]\Big)\label{eq:ha1}
\end{align}
where,
$${\imath^{\alpha}_{Z}}(Z):=2^{(1-\alpha)\imath_{Z}(Z)}$$

\underline{\textit{Case 1: }$\alpha \in [0,1)$}:\\

From Theorem (\ref{thm:1}), we have:
\begin{align}
    \mathbb{P}[\imath_Z(Z) > t] &\geq \sup_{y \in \cY}\mathbb{P}[\imath_{X|Y}(X|Y)>t|Y=y]\notag \\
    &=G(t) \notag
\end{align}
Note that $\mathbb{P}[2^{(1-\alpha)\imath_{Z}(Z)}>2^{(1-\alpha)t}]=\mathbb{P}[\imath_Z(Z) > t]$ holds for $0 \leq \alpha<1$. Therefore, 
    \begin{align}
    \mathbb{P}[\imath^{\alpha}_{Z}(Z) > t'] \geq G(t) \label{eq:wwe}
    \end{align}
where $t':=2^{(1-\alpha)t}$.\\
Now, 
\begin{align}
\mathbb{E}[\imath^{\alpha}_Z(Z)]&\stackrel{}{=}\int_{0}^{\infty}\Big(1-\mathbb{F}_{\imath^{\alpha}_Z}(t')\Big)dt'\label{eq:y1}\\
&\stackrel{}{=}\int_{0}^{\infty}\mathbb{P}[\imath^{\alpha}_{Z}(Z) > t']dt'\notag\\
&\stackrel{}{=}\int_{0}^{1}\mathbb{P}[\imath^{\alpha}_{Z}(Z) > t']dt'+\int_{1}^{\infty}\mathbb{P}[\imath^{\alpha}_{Z}(Z) > t']dt'\notag\\
&\stackrel{}{=}1+\int_{1}^{\infty}\mathbb{P}[\imath^{\alpha}_{Z}(Z) > t']dt'\label{eq:y2}\\
&\stackrel{}{\geq}1+\int_{1}^{\infty}G(t)dt'\label{eq:y3}\\
&\stackrel{}{=}1+(1-\alpha)\ln2\int_{0}^{\infty}G(t)2^{(1-\alpha)t}dt.\label{eq:hhh}
\end{align}
Here,
\begin{enumerate}
    \item[(\ref{eq:y1})] follows from noting that $\imath^{\alpha}_Z(Z)> 0$ and \eqref{eq:excdf}.
    \item[(\ref{eq:y2})] follows from noting that $\imath^{\alpha}_{Z}(Z) \in [1,\infty)$ for $\alpha \in [0,1)$.
    \item[(\ref{eq:y3})] follows from equation (\ref{eq:wwe}).
    \item[(\ref{eq:hhh})] follows from the change of variables in integration.
\end{enumerate}
On substituting equation (\ref{eq:hhh}) in equation (\ref{eq:ha1}), we have
\begin{align}
   H_{\alpha}(Z) &\geq \frac{1}{1-\alpha}\log\Bigg[1+(1-\alpha)\ln2\int_{0}^{\infty} G(t)2^{(1-\alpha)t} dt\Bigg] \label{eq:g1} \\
&=\frac{1}{1-\alpha}\log\Bigg[1+\int_{0}^{\infty}  J_{\alpha}(t)dt\Bigg].\notag
\end{align}
This completes the proof for the case of $ 0 \leq \alpha<1$.\\

\underline{\textit{Case 2: }$\alpha \in (1,\infty)$}:\\

Here, we will use the modified version (cf. Remark (\ref{rem:fff})) of Theorem (\ref{thm:1}) i.e.,
 \begin{align}
    \mathbb{P}\Big[\imath_{Z}(Z) \geq t\Big] \geq \sup_{y \in \cY} \mathbb{P}\Big[\imath_{X|Y}(X|Y)\geq t|Y=y\Big]:=G^{'}(t)\notag
\end{align}
Note that $\mathbb{P}[2^{(1-\alpha)\imath_{Z}(Z)}\leq 2^{(1-\alpha)t}]=\mathbb{P}[\imath_Z(Z) \leq t]$ holds for $\alpha>1$. Therefore,
 \begin{align}
    \mathbb{P}[\imath^{\alpha}_{Z}(Z) \leq t'] \geq G^{'}(t)\label{eq:ren}
    \end{align} 
where $t':=2^{(1-\alpha)t}$.\\
Now, 
\begin{align}
\mathbb{E}[\imath^{\alpha}_Z(Z)]&\stackrel{}{=}\int_{0}^{\infty}\Big(1-\mathbb{F}_{\imath^{\alpha}_Z}(t')\Big)dt'\label{eq:z1}\\
&\stackrel{}{=}\int_{0}^{\infty}\Big(1-\mathbb{P}[\imath^{\alpha}_{Z}(Z) \leq t']\Big)dt'\notag\\
&\stackrel{}{=}\int_{0}^{1}\Big(1-\mathbb{P}[\imath^{\alpha}_{Z}(Z) \leq t']\Big)dt' \notag \\ &\hspace{15mm}+\int_{1}^{\infty}\Big(1-\mathbb{P}[\imath^{\alpha}_{Z}(Z) \leq t']\Big)dt'\notag\\
&\stackrel{}{=}\int_{0}^{1}\Big(1-\mathbb{P}[\imath^{\alpha}_{Z}(Z) \leq t']\Big)dt'\label{eq:z2}\\
&\stackrel{}{\leq}\int_{0}^{1}\Big(1-G^{'}(t)\Big)dt'\label{eq:z3}\\
&\stackrel{}{=}(\alpha-1)\ln2\int_{0}^{\infty}(1-G^{'}(t))2^{(1-\alpha)t}dt \label{eq:z4}\\
&\stackrel{}{=}1+(1-\alpha)\ln2\int_{0}^{\infty}G^{'}(t)2^{(1-\alpha)t}dt \label{eq:hhh1}
\end{align}
Here,
\begin{enumerate}
    \item[(\ref{eq:z1})] follows from noting that $\imath_Z(Z)\geq 0$ and \eqref{eq:excdf}.
    \item[(\ref{eq:z2})] follows from noting that $\imath^{\alpha}_{Z}(Z) \in (0,1)$ for $\alpha \in (1,\infty)$.
    \item[(\ref{eq:z3})] follows from equation (\ref{eq:ren}).
    \item[(\ref{eq:z4})] follows from the change of variables of integration.
\end{enumerate}
On substituting equation (\ref{eq:hhh1}) in equation (\ref{eq:ha1}), we have
\begin{align}\label{eq:g2}
   H_{\alpha}(Z) \geq \frac{1}{1-\alpha}\log\Bigg[1+(1-\alpha)\ln2\int_{0}^{\infty}G^{'}(t)2^{(1-\alpha)t}dt \Bigg]
\end{align}
\begin{remark}
Note that the information spectrum CDF of a discrete random variable is discontinuous or right continuous (step function). Therefore, the computation of the integral in~\eqref{eq:g1} and~\eqref{eq:g2} yields the same value.
\end{remark}
Therefore, we have:
\begin{align}
   H_{\alpha}(Z) \geq \frac{1}{1-\alpha}\log\Bigg[1+\int_{0}^{\infty}J_{\alpha}(t)dt \Bigg].\notag
\end{align}
This completes the proof for the case of $\alpha>1$.\\
\end{proof}

\end{document}